 \newtheorem{assumption}{Assumption}
\newcommand{\EQ}{\begin{eqnarray}\begin{aligned}}
\newcommand{\EN}{\end{aligned}\end{eqnarray}}
\newcommand{\EQQ}{\begin{eqnarray*}\begin{aligned}}
\newcommand{\ENN}{\end{aligned}\end{eqnarray*}}
\newtheorem{theorem}{\bf Theorem}[section]
\newtheorem{lemma}{\bf Lemma}[section]
\newtheorem{remark}{\bf Remark}[section]
\renewcommand{\t}{^{\mbox{\tiny {\rm T}}}}
\renewcommand{\c}{^{\rm c}}
\renewcommand{\r}{^{\rm r}}
\begin{document}

\title{Event-Triggered Output Synchronization of Heterogeneous Nonlinear  Multi-Agents}

\author{Gulam Dastagir Khan, Zhiyong Chen, and Yamin Yan
\thanks{The authors are with the School of Electrical Engineering and Computing, University of Newcastle, Callaghan, NSW 2308, Australia. {Email-\texttt{ GulamDastagir.Khan@uon.edu.au}, \{\texttt{zhiyong.chen, yamin.yan}\} \texttt{@newcastle.edu.au}. 
Z. Chen is the corresponding author. Tel: +61 2 49216352, Fax: +61 2 49216993. 
}}}

\maketitle
  
\begin{abstract}
This paper addresses the output synchronization problem for heterogeneous nonlinear multi-agent systems with distributed event-based controllers. Employing the two-step synchronization process, we first outline the distributed event-triggered consensus controllers for linear reference models under a directed communication topology. It is further shown that the subsequent triggering instants are based on intermittent communication. Secondly, by using
certain input-to-state stability (ISS) property, we design an event-triggered perturbed output regulation controller for each nonlinear multi-agent. The ISS technique used in this paper is based on the milder condition that each agent has a certain ISS property from input (actuator) disturbance to state rather than
measurement (sensor) disturbance to state. With the two-step design, the objective of output synchronization is successfully achieved with Zeno behavior avoided.

\end{abstract}
\begin{IEEEkeywords} Event-triggered control, multi-agent systems, 
output regulation, nonlinear systems, robust control, Zeno behavior
\end{IEEEkeywords}

\section{Introduction}

Over the past years, researchers in the field of systems and control have made extensive efforts in designing various event-triggering algorithms for networked multi-agent systems (MASs) for the tasks like consensus, synchronization, formation, etc. The amalgamation of event-triggering techniques with MASs increases the functionality of resource-limited microprocessors in a networked control environment.  


In this paper, we study the event-triggered robust output synchronization problem under the two-step framework proposed in \cite{chen2016robust,Zhu2016}. This means that the overall synchronization process is broken down into two steps or processes. The first step is the event-based consensus of linear homogeneous reference models (exosystems) and the second one is the event-based robust perturbed output regulation of a nonlinear system.  Since these two processes occur concurrently, from the regulation perspective, the network influences are regarded as external perturbation to an agent's exosystem, eventually resulting in robust perturbed output regulation problem.   

The aforementioned two steps involve two technical problems that, combined, achieve the final objective of output synchronization
of nonlinear heterogeneous systems in an event-triggered control fashion. Even though the integration of these two problems 
is novel in the event-triggered control scenario, the solutions to these two problems have their own independent contributions
to be elaborated as follows.  

\medskip
\noindent {\it  Problem 1- Event-triggered Reference Model Consensus}

The literature on event-triggered consensus of linear MASs is rich. However, a complete solution is still
inadequate for some fundamental technical bottlenecks. The early work can be found in 
\cite{dimarogonas2012distributed,dimarogonas2009aevent,dimarogonas2009bevent} that requires 
continuous communication between neighboring agents.  
The disadvantage of continuous communication is obvious, which contradicts the philosophy of event-triggered control.
All the later development was mainly on the triggering mechanism not relying on continuous communication. 
A conceivable scenario is that each agent updates its control action and broadcasts its state when 
its own event occurs, which depends on the neighbors' broadcast states. 
{\it A fundamental difficulty has arisen that is the tradeoff between the lower bound of the 
inter-event intervals (reciprocal of event frequency) and steady-state control error.}
A smaller control error may result in smaller inter-event intervals.
For a triggering system without a strictly positive lower bound for
inter-event intervals,  an infinite number of events may occur in a finite time period, called 
Zeno behavior.  Roughly speaking, to exclude the risk of Zeno behavior, 
the design cost is the loss of asymptotic control accuracy. 
 
 In the early work \cite{aastrom1999comparison}, the idea is to trigger an event 
whenever the state deviates from the equilibrium by a specified constant threshold. 
It excludes Zeno behavior, but makes the system state converge to a ball centered at the origin, 
not asymptotically to the origin, due to the aforementioned tradeoff. 
It is worth mentioning that the event function, a function that characterizes the conditions to trigger an event, 
used in \cite{aastrom1999comparison} is state independent. 
The similar idea of inclusion of a constant threshold in a state dependent event function was discussed 
in \cite{borgers2014event}. 
Different strategies of adding a threshold in an event function for excluding 
Zeno behavior can be found in other works on event-triggered control of MASs in 
\cite{garcia2013decentralised,zhu2014event,xing2017event}. 
 
In recent five years, researchers have put great efforts in solving the tradeoff, 
targeting both exclusion of Zeno behavior and asymptotic control performance without a steady-state error. 
To the best of our knowledge, there has not been a success in achieving both targets,  unless some additional cost is applied. 
Before we discuss the solutions of achieving both targets with some additional costs, we would like to mention 
some results that claim both without those costs. But unfortunately, we believe there exist some flaws in the technical development. 
The ``no free lunch'' theorem still applies for the research on this topic.   
The technical flaw in proving exclusion of Zeno behavior in \cite{fan2013distributed}  was pointed out in 
\cite{sun2018comments}. A clear-cut result on consensus of a general linear MAS was developed in 
\cite{hu2016consensus} where we believe an incorrect statement exists in the proof.  In particular, the proof 
confused $\lim_{k\to \infty} q_i(t^i_k) =0$ and $q_i(t^i_k) =0$ and thus a mistake followed.  An error also exists 
in the proof given in \cite{du2017distributed}.  The inter-event interval for each agent should be calculated based 
on its individual triggering function, that is, the change rate of $\|e_i(t)\|/\|z_i(t)\|$ using the notations in the paper. 
However, the change rate of $\|e(t)\|/\|z(t)\|$ of the whole network was used in the proof. 
Thus, the subsequent statement failed.
 
Next, let us discuss various approaches in achieving the two targets of  exclusion of Zeno behavior and asymptotic control performance,  at different additional costs. The cost of the first approach is using a synchronous clock mechanism. 
 For example, in \cite{liu2012event}, the control input for an agent is updated both at its own event time  as well as that of its neighbors, 
 which results in a synchronous triggering clock for all the agents.  
 In \cite{nowzari2016distributed}, during a certain time interval, 
 an agent receives new information from the broadcast of a neighbor and then immediately broadcasts its state, assuming 
 no time-delay in information broadcasting. It also results in a synchronous triggering clock 
for a set of connected agents.  A synchronous clock simplifies the triggering mechanism in a network
and guarantees the two targets.

The idea of inclusion of a constant threshold in event function has been discussed 
in  \cite{aastrom1999comparison, borgers2014event}.   An extension of this idea is to replace 
a constant threshold by a time-dependent function.  On one hand, this positive
time-dependent function sets the lower bound of inter-event intervals; on the other hand, 
it, decreasing to zero with time, drives the control error to zero. This is the second approach for achieving 
the aforementioned two targets; see, e.g.,   \cite{seyboth2013event,girard2015dynamic,sun2016new}. 
One cost of this approach is that the  time dependent function
makes the closed-loop system non-autonomous and the lower bound of the
inter-event intervals relies on time, or the initial states.  

The third approach is to combine event-triggered control and sampled-data control. 
For example, in \cite{meng2013event,guo2014distributed},  
the sampling period for all agents is synchronized by a clock and
the sampled data are used for event detection. 
Whether or not sampled data of agent should be used for actuation and broadcasted at the sampling instant 
depends on whenever an event is detected, which makes the protocol different from sampled-data control. 
In this setting, it is easy to prove that the inter-event interval is at least one sampling period, excluding Zeno behavior.  
The cost is again the existence of a synchronous  clock.

The fourth approach also takes the advantage of sampled-data control in naturally 
setting the lower bound of inter-event intervals as at least one sampling period. 
But this approach does not require a synchronous clock.
The idea is to trigger an event when a normal triggering condition is satisfied (which by itself 
may result in Zeno behavior) AND the inter-event interval is larger than a specified 
sampling period (called a fixed timer).  The cost is that the global information about 
the network is required in determining the sampling period. In our opinion, this cost 
is the minimal in the existing approaches. 
This approach was first used in \cite{mazo2011decentralized,tallapragada2014decentralized} 
for event-triggered stabilization problems.  It was also used for consensus of 
first-order integrators in \cite{fan2015self} for a network of an undirected and connected graph. 
The result was further generalized to general linear MASs in  
\cite{hu2017output}, also for an undirected and connected network;
and in \cite{xu2017event}  for a leader-following network containing an undirected subgraph of the follower network. 
It is noted that,  in \cite{xu2017event},  an agent updates the control protocol at its
own triggering instants and its neighbors' triggering instants as well.

In this paper, we apply the aforementioned fourth approach 
to the reference model consensus problem. The main contribution is to extend this 
most cost-efficient method for handling  
general linear MASs from an undirected network to a directed network. 
 It is challenging since the existing works for dealing with undirected topology rely on the diagonalization of 
the Laplacian matrix and they do not apply to the asymmetry associated with the Laplacian of a directed graph. 
Moreover, a clear-cut condition for triggering event in terms of 
an explicit mathematical expression is given in this paper, while 
it was hidden in complicated algorithmic descriptions in the existing works, e.g., \cite{hu2017output,xu2017event}.

%
%

\medskip
\noindent {\it  Problem 2- Event-Triggered Perturbed Output Regulation}


Compared with the bulky literature on event-triggered control of linear MASs, 
the research results on event-triggered control of nonlinear systems are much limited. 
One typical result on event-triggered stabilization for a nonlinear system 
is given in  \cite{liu2015small} using the
small gain theorem based on the prerequisite that the nonlinear system 
with a continuous-time (not event-triggered) controller 
can be input-to-state stable (ISS)  from the measurement error to state. 
The idea was also used for partial state feedback and output feedback scenarios in \cite{Liu2015}.

This event-triggered stabilization technique was also extended to deal with
the output regulation problem, integrated with the internal model technique. 
The output regulation problem, aiming at reference tracking and disturbance rejection,
is more challenging than stabilization, because it involves a dynamic
internal model design and requires a technique to perform the emulation of dynamic
compensator in an  event-triggered manner. 
For example, the event-triggered output regulation
problem was studied in \cite{liu2017event} 
for a class of nonlinear systems, but it can only confine the steady-state tracking error to a prescribed bound, 
when Zeno behavior is excluded.  
The technique was extended to handle the cooperative output regulation of nonlinear MASs in 
\cite{liu2017cooperative}, and later in \cite{wang2018event} for the output feedback design.  
Again, the problem in these papers was  solved practically in the sense that the  tracking error is bounded rather than asymptotically approaching zero.

There were two technical disadvantages in the aforementioned works on event-triggered stabilization and/or
output regulation.  The first one  is that 
the ISS property, from the measurement (sensor) error to state,
is technically difficult to achieve because 
global internal stabilizability does not imply global external
stabilizability for nonlinear systems, even for small sensor disturbances.
The second one is the steady-state error in handling the output regulation problem.
Very recently, these two disadvantages were lightened in  \cite{Khan2018}
where a new event-triggered stabilization technique was proposed that
requires a milder condition that the controlled system has a certain ISS property, but from
the  input (actuator) disturbance to state. Also, the technique was further applied to
design an asymptotic output regulation controller.
 The framework for event-triggered control in  \cite{Khan2018} was motivated by the sampled-data
technique developed in \cite{chen2014performance}.
 
In this paper, we aim to further extend the technique developed in \cite{Khan2018} for an individual nonlinear system
to the networked scenario. In dealing with a networked MAS, the reference model designed in the first step works as
an exosystem for the output regulation problem in the second step. Obviously, each reference model is non-autonomous by itself
as it is subject to network influence. Such an output regulation problem with a non-autonomous exosystem
is called a perturbed output regulation problem. In summary, another main contribution of this paper 
is to develop a new event-triggered controller for the perturbed output regulation problem. 

The rest of this paper is organized as follows. In Section~II,  the robust output synchronization problem  is formulated 
and divided into two problems in a rigorous framework.  The two problems are studied in Sections III and IV, respectively. 
A numerical example is presented in Section~V.
Finally, some conclusions are drawn in Section~VI.

%

\section{Problem Formulation}               
Consider $N$ nonlinear agents described by the following lower triangular equations:
\EQ
\dot{z}_i&=f_{i0}(z_i,x_{i1},w_i) \\
\dot{x}_{i1}&=f_{i1}(z_i,{x_{i1}},w_i)+b_{i1}(w_i)x_{i2}\\
&\vdots \\
\dot{x}_{ir_i}&=f_{ir_i}(z_i,{x_{i1}},\cdots,{x_{ir_i}},w_i)+b_{ir_i}(w_i)u_i\\
y_i&=x_{i1},\;\;\;i=1,\cdots,N, \label{agent}
\EN
where $z_i\in\mathbb{R}^{m_i}$ and $x_{i1},\cdots, x_{i r_i}\in\mathbb{R}$  are the agent states and $u_i\in\mathbb{R}$ is the agent input, and $y_i\in\mathbb{R}$ is the output.  The vector $w_i\in\mathbb{W}_i$ represents system uncertainties for a compact set $\mathbb{W}_i \subset\mathbb{R}^{l_i}$. The system has a relative degree $r_i$. The function $f_{is},$ $s = 0,\cdots,r_i$, satisfies $f_{is}(0,\cdots,0,w_i) = 0$ for all $w_i\in\mathbb{W}_i$, which ensures the equilibrium point at the origin. 

The main aim of this paper is to design a distributed controller $u_i$ for each agent in 
(\ref{agent})  in an event-triggered manner such that the agents altogether achieve output synchronization to an agreed trajectory $y_{\infty}(t)$ in the sense of 
\EQ
\lim_{t\to\infty}\|  y_i(t)-y_{\infty}(t)\| =0,\;\;i=1,\cdots,N.\label{syn}
\EN
We consider the scenario that $y_{\infty}(t)$ is neither prescribed a priori nor known to any agent, but it satisfies a specific pattern governed by
\EQ
 y_{\infty}(t)=c(v_{\infty}(t)) \\
 \lim_{t\rightarrow\infty} [\dot{v}_{\infty}(t)-Av_{\infty}(t)] =0 
. \label{pattern}
\EN
where $v_{\infty}\in\mathbb{R}^q$, $A$ is a fixed matrix, and $c(\cdot)$ is a sufficiently smooth function.

Suppose the communication network among the agents  in (\ref{agent}) is modeled through a directed graph, $\mathcal{G}$. For  a directed graph $\mathcal{G}=(\mathcal{V},\mathcal{E})$, the in-neighbor set $\mathcal{N}_i$ is defined as $\mathcal{N}_i=\{j\in \mathcal{V}\vert(j,i)\in\mathcal{E}\}$ and the out-neighbor set $\mathcal{M}_i$ is defined as $\mathcal{M}_i=\{j\in \mathcal{V}\vert(i,j)\in\mathcal{E}\}$. Let $a_{ij}$ denotes the weight for the edge $(j, i)\in\mathcal{E}$ and is set to positive weight if $(j, i)\in\mathcal{E}$ and $0$ if $(j, i)\notin\mathcal{E}$. The (asymmetric) Laplacian matrix $\mathcal{L} = [l_{ij} ] \in \mathbb{R}^{N\times N}$ associated with  $\mathcal{G}$ is defined as $l_{ii} =\sum_{j=1,j\neq i}^{N}a_{ij}$ and $l_{ij}=-a_{ij},i\neq j.$ Let ${\mathbf r}, \mathbf{1} \in \mathbb{R}^N$, with  $\mathbf{r}\t \mathbf{1}=1$, be the left and right eigenvectors of $\mathcal{L}$ associated with the zero eigenvalue, respectively. In particular, 
$\mathbf{1}$ is the vector with all elements being 1. The vector ${\mathbf r}$ has all positive elements, ${\mathbf r_1},\cdots,{\mathbf r}_N>0$, and is called a positive vector. 
Denote $R := {\rm diag}\{\mathbf r_1,\cdots,\mathbf r_N\}$.

For each agent, we create a linear reference model, with $v_i\in\mathbb{R}^q$, as follows,
\begin{eqnarray}
\dot{v}_i(t)=Av_i(t)+B{\mu}_i(t),\;\;i=1,\cdots,N. \label{exosys}
\end{eqnarray}
With proper design of the input $\mu_i\in\mathbb{R}$, it is expected that the reference models will achieve 
consensus on a trajectory $v_{\infty}(t)$, which does not 
physically exist though.  The matrix $A$ is the one specified in (\ref{pattern}) and $B\neq 0$ 
is chosen to satisfy Assumption~\ref{as:1} later.

With the above setting, the objective now is to formulate  suitable event-triggering mechanisms and the associated sampled-data controllers $\mu_i$ and $u_i$ in order to achieve the state consensus of 
reference models and output regulation of  multi-agents, respectively. The rigorous formulations for these two processes are given below after some notations are introduced.

The relative measurement of reference states is defined as 
\begin{eqnarray}
{p}_i(t):=\sum_{j\in\mathcal{N}_i}a_{ij}(v_j(t)-v_i(t)), \label{eqn4}
\end{eqnarray}
which will be used for consensus controller design. 
Denote \begin{eqnarray}
e_i(t):=y_i(t)-c(v_i(t)).\label{trackerror}
\end{eqnarray}
For $i=1,\cdots, N$, denote the sets $\mathbb{S}\c_i=\{0,1,\cdots, k_i\c\}$ for a finite integer $k_i\c$ or $\mathbb{S}\c_i=\mathbb{Z}_+$ where $\mathbb{Z}_+$
is the set of non-negative integers. 
Similarly, denote the sets $\mathbb{S}\r_i=\{0,1,\cdots, k_i\r\}$ for a finite integer $k_i\r$ or $\mathbb{S}\r_i=\mathbb{Z}_+$.

\medskip

\noindent {\textbf{Problem 1- Event-triggered Reference Model Consensus}}: 
For $i=1, \cdots, N$, 
design a distributed event-triggering mechanism for generating a time sequence $t\c_{ik},\; k \in \mathbb{S}\c_i$,
\footnote{
If $\mathbb{S}\c_i$  is a finite set, we let $t\c_{i,k\c_i+1} =\infty$ for complement of notation.
}  satisfying 
 $t\c_{i0}=0$ and $t\c_{ik}<t\c_{i,{k+1}}$, and a consensus controller 
\begin{eqnarray}
\mu_i(t)=g_iKp_i(t\c_{ik}), \;\;t\in[t\c_{ik},t\c_{i,k+1}) \label{eventcont0}
\end{eqnarray}
for some gains $K$ and $g_i$, 
such that the reference models (\ref{exosys}) achieve consensus in the sense of
\begin{eqnarray}
\lim_{t\to\infty}\|  v_{i}(t)-v_\infty (t)\| =0 \label{eqn3}
\end{eqnarray}
for an agreed trajectory $v_{\infty}(t)$  satisfying  (\ref{pattern}).  


\medskip

\noindent {\textbf{Problem 2- Event-triggered Perturbed  Output Regulation}}: 
For $i=1, \cdots, N$, 
design a distributed event-triggering mechanism for generating a time sequence $t\r_{ik},\; k \in \mathbb{S}\r_i$,  satisfying 
 $t\r_{i0}=0$ and $t\r_{ik}<t\r_{i,{k+1}}$,  
and an output regulation controller 
\begin{eqnarray}
\bar u_i(t)=\kappa_i(\bar x_i(t\r_{ik})), \;\;t\in[t\r_{ik},t\r_{i,k+1})
\label{regucont}
\end{eqnarray}
with dynamics sensor and actuator compensators (filters)
\EQ
u_i(t) &=  \beta^{\rm u}_i (\bar u_i(t), \eta_i(t)) \\
\bar x_i(t) &=  \beta^{\rm x}_i (x_i(t), \eta_i(t), e_i(t))  \\
\dot{\eta}_i(t)&=\varkappa_i(\eta_i(t), x_i(t), u_i(t)), \label{filter}
\EN
such that the agent (\ref{agent}) with the reference model (\ref{exosys}) achieves the property
\begin{eqnarray}
\lim_{t\to\infty}\|  e_{i[t,\infty)}\| \leq\hat{\gamma}_{i}\Big(\lim_{t\to\infty}\|  \mu_{i[t,\infty)}\| \Big) \label{reguerror1}
\end{eqnarray}
for some $\hat{\gamma}_i\in\mathcal{K}$. 


\begin{theorem}\label{synchronization}
Consider the MAS (\ref{agent}) and the reference model (\ref{exosys}). Suppose Problem 1 and Problem 2 are solved separately.  Then, the closed-loop system composed of (\ref{agent}), (\ref{exosys}) and the controllers 
(\ref{eventcont0}),   (\ref{regucont}),  (\ref{filter}) achieves 
output synchronization in the sense of (\ref{syn})
for an agreed trajectory $y_{\infty}(t)$ satisfying  (\ref{pattern}).  
\end{theorem}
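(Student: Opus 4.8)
The plan is to glue the two solved subproblems together through the algebraic identity
$y_i(t)-y_\infty(t)=e_i(t)+\bigl(c(v_i(t))-c(v_\infty(t))\bigr)$,
which follows from the definition $e_i=y_i-c(v_i)$ in (\ref{trackerror}) and the relation $y_\infty=c(v_\infty)$ in (\ref{pattern}). It then suffices to show that both terms on the right-hand side tend to zero as $t\to\infty$; the limiting trajectory $y_\infty=c(v_\infty)$ inherits the pattern (\ref{pattern}) directly from $v_\infty$, so no separate argument is needed for that part of the claim.

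For the term $c(v_i)-c(v_\infty)$, since Problem 1 is solved, (\ref{eqn3}) gives $\|v_i(t)-v_\infty(t)\|\to0$ for each $i$; as $c(\cdot)$ is sufficiently smooth, hence locally Lipschitz, and the closed-loop trajectories remain bounded (a byproduct of the designs of Problems 1 and 2), $c$ is uniformly continuous on the relevant compact region, so $c(v_i(t))-c(v_\infty(t))\to0$. For the term $e_i$, by (\ref{reguerror1}) we have $\lim_{t\to\infty}\|e_{i[t,\infty)}\|\le\hat{\gamma}_i\bigl(\lim_{t\to\infty}\|\mu_{i[t,\infty)}\|\bigr)$ with $\hat{\gamma}_i\in\mathcal{K}$, so it is enough to prove $\lim_{t\to\infty}\|\mu_{i[t,\infty)}\|=0$; then $\hat{\gamma}_i(0)=0$ forces $\lim_{t\to\infty}\|e_{i[t,\infty)}\|=0$, i.e. $e_i(t)\to0$. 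To establish the vanishing perturbation I would first note that (\ref{eqn3}) implies the relative measurements (\ref{eqn4}) satisfy $p_i(t)=\sum_{j\in\mathcal{N}_i}a_{ij}(v_j(t)-v_i(t))\to0$, since each $v_j-v_i=(v_j-v_\infty)-(v_i-v_\infty)\to0$; then, because $\mu_i$ in (\ref{eventcont0}) is the zero-order hold of $g_iKp_i$ sampled at the event times $t\c_{ik}$, and because solving Problem 1 in particular excludes Zeno behavior so that the event times escape to infinity, the held value $g_iKp_i(t\c_{ik})$ converges to $0$, giving $\mu_i(t)\to0$ and hence $\lim_{t\to\infty}\|\mu_{i[t,\infty)}\|=0$.

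Putting the pieces together, $\|y_i(t)-y_\infty(t)\|\le\|e_i(t)\|+\|c(v_i(t))-c(v_\infty(t))\|\to0$, which is exactly (\ref{syn}). The main obstacle is the middle step, namely deducing that the exosystem perturbation $\mu_i$ vanishes in steady state from the mere fact that the reference models reach consensus: $\mu_i$ is not $g_iKp_i$ itself but its sampled-and-held version, so one has to lean on the event-triggered structure and, crucially, on the Zeno-freeness guaranteed by the solution of Problem 1 in Section III. A smaller technical point is justifying the implication $v_i\to v_\infty\Rightarrow c(v_i)\to c(v_\infty)$, which requires knowing the trajectories are ultimately confined to a compact set so that smoothness of $c$ upgrades to the uniform continuity actually used.
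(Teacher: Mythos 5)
Your overall route is the same as the paper's: reduce $y_i-y_\infty$ to $e_i+(c(v_i)-c(v_\infty))$, handle the second term via consensus of the reference models, and handle the first via (\ref{reguerror1}) after showing $\lim_{t\to\infty}\mu_i(t)=0$. The gap is in that last step. You conclude $\mu_i\to 0$ by arguing that $p_i(t)\to 0$ and that ``the event times escape to infinity,'' which you derive from Zeno-freeness. This fails on two counts. First, exclusion of Zeno behavior only bounds inter-event intervals from below; it does not guarantee that infinitely many events occur, and the problem formulation explicitly allows $\mathbb{S}\c_i$ to be a finite set $\{0,1,\cdots,k_i\c\}$ (see the footnote attached to Problem 1, which sets $t\c_{i,k\c_i+1}=\infty$ in that case). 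Second, Zeno-freeness is not part of the hypothesis ``Problem 1 is solved'' in Theorem~\ref{synchronization}; condition (\ref{zeno}) is a separate requirement established only in Sections III--IV, so you are leaning on something the theorem does not assume. In the finitely-many-events case your argument collapses: $\mu_i(t)=g_iKp_i(t\c_{i k_i\c})$ is a \emph{constant} for all $t\ge t\c_{i k_i\c}$, and $p_i(t)\to 0$ says nothing about the value of $p_i$ at the fixed finite time $t\c_{i k_i\c}$.

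The paper closes this case with a short contradiction argument you are missing: writing $\tilde v_i=v_i-v_\infty$, one has $\dot{\tilde v}_i=A\tilde v_i+B\mu_i-[\dot v_\infty-Av_\infty]$; if $\mu_i$ were a nonzero constant on $[t\c_{i k_i\c},\infty)$, then since $B\neq 0$ and $\dot v_\infty-Av_\infty\to 0$, the persistent forcing $B\mu_i$ would be incompatible with $\tilde v_i\to 0$, contradicting (\ref{eqn3}); hence the held constant must be zero. Once you add this case split (and keep your argument for the infinite-event case, where $t\c_{ik}\to\infty$ does give $\lim_{k\to\infty}p_i(t\c_{ik})=0$), the proof matches the paper's. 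Your side remark that passing from $v_i\to v_\infty$ to $c(v_i)\to c(v_\infty)$ requires boundedness of the trajectories so that smoothness of $c$ upgrades to uniform continuity is a fair point; the paper glosses over it in the step from $\lim_t\|y_i-c(v_i)\|=0$ to $\lim_t\|y_i-c(v_\infty)\|=0$.
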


\begin{proof}
When (\ref{eqn3}) is accomplished in Problem 1, for each reference model, one has
\EQQ
\lim_{t\to\infty}p_i(t) =\lim_{t\to\infty}\sum_{j\in\mathcal{N}_i}a_{ij}(v_j(t)-v_i(t)) =0.
\ENN
If $\mathbb{S}\c_i=\mathbb{Z}_+$, one has
\EQQ
\lim_{k \to\infty} p_i(t\c_{ik}) = \lim_{t\to\infty}p_i(t) =0
\ENN
and hence
\EQQ
\lim_{t\to\infty}\mu_i(t) =\lim_{k \to\infty} g_iKp_i(t\c_{ik}) = 0.\ENN
If $\mathbb{S}\c_i$  is a finite set, 
one has 
\EQQ
\mu_i(t)=g_iKp_i(t\c_{i {k_i\c}}), \;\;t \geq t\c_{i{k_i\c}},
\ENN
which is a constant. 
It is easy to see that the signal $\tilde v_i(t)  = v_{i}(t)-v_\infty (t)$ satisfies 
\EQQ
\dot {\tilde v}_i(t)
= A\tilde v_i(t)+B{\mu}_i(t) - [\dot{v}_{\infty}(t)-Av_{\infty}(t)], 
\ENN
which implies, noting $B\neq 0$,
\begin{eqnarray}
\mu_i(t)=0, \;\;t \geq t\c_{i{k_i\c}}.
\end{eqnarray}
Otherwise, it contradicts with  (\ref{eqn3}).

Therefore, in any case, one has 
$\lim_{t\to\infty}\mu_i(t) = 0$, which, together with 
(\ref{reguerror1}) in Problem 2, implies
$\lim_{t\to\infty}e_i(t) = 0$.
In other words, one has
\EQQ
\lim_{t\to\infty} \|y_i(t)-c(v_i(t))\|=0
\ENN
and hence 
\EQQ
\lim_{t\to\infty} \|y_i(t)-c(v_\infty(t))\|=0.
\ENN
Therefore, (\ref{syn}) is verified.  
\end{proof}

\begin{figure}[hbtp]
\centering
\includegraphics[scale=0.42]{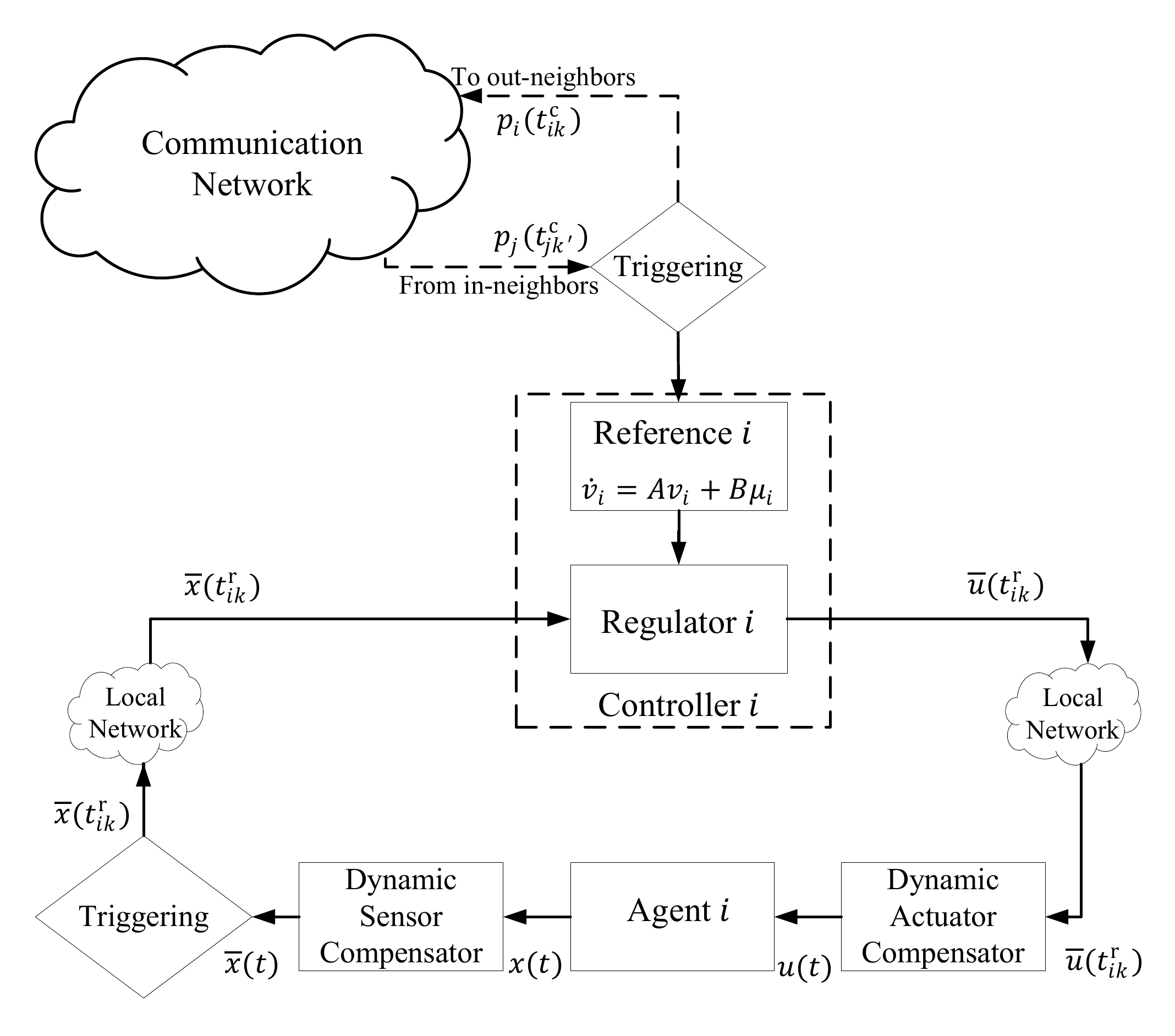}
\caption{Schematic diagram of event-triggered control design for each individual agent.\label{fig:1}}
\end{figure}

\begin{remark}
The schematic diagram of event-triggered control design for each individual agent is depicted in Fig.~\ref{fig:1}.  The overall synchronization process is broken down into two steps or processes. The event-triggered output regulation controller for each individual agent is actually decoupled from the network and hence completely distributed. This means that the event-triggering mechanisms for both linear consensus and nonlinear output regulation for each agent are designed independently. In the first step, once an event is triggered at $t_{ik}\c$, $k\in\mathbb{S}_i\c$, for agent $i$, it executes the consensus control actuation and broadcasts the state $p_i(t_{ik}\c)$ to 
its neighbors. Meanwhile, it also receives the information broadcasted from its neighbors, say $p_j(t_{jk'}\c)$ from agent $j$, at  
$t_{jk'}^c$, $k'\in\mathbb{S}_j\c$.  It is worth noting that, receiving $p_j(t_{jk'}\c)$ does not trigger an event for agent $i$, 
but it will be used to determine its event-triggering condition.   In the second step, an event is triggered at $t_{ik}\r$, $k\in\mathbb{S}_i\r$ for each agent and the measurement is transmitted via a local network to the regulation controller that is immediately executed. No broadcast among agents occurs in this step.  For each agent, the triggering instants $t_{ik}\c$ and $t_{ik}\r$ are completely independent. 
\end{remark}

Theorem~\ref{synchronization} shows that the solvability of the output synchronization problem relies on 
that of Problem~1 and Problem~2. The two problems will be investigated in the subsequent two sections, respectively.
Moreover, it is an important issue in event-triggered control 
that the so-called Zeno behavior must be excluded in the sense
of
\EQ \inf_{k\in\mathbb{S}\c_i}\{t\c_{i,k+1}-t\c_{ik}\}>0,\;
\inf_{k\in\mathbb{S}\r_i}\{t\r_{i,k+1}-t\r_{ik}\}>0, \\ \; i=1,\cdots,N. \label{zeno}
\EN
Obviously, with (\ref{zeno}), 
an infinite number of events never occur in a finite time period.
This issue will also be completely addressed in the following two sections.

\section{Problem 1- Event-triggered Reference Model Consensus}\label{consensus}

In this section, we will give an explicit solution to Problem~1, including an 
explicit mechanism for generating the time sequence $t\c_{ik},\; k \in \mathbb{S}\c_i$
and an explicit controller (\ref{eventcont0}).
For the convenience of analysis, we define the following signals
\EQ
p_i\c(t)&=p_i(t\c_{ik}), \;\;t\in[t\c_{ik},t\c_{i,k+1})  \\
\epsilon_i(t)&=p_i\c(t)-p_i(t),\;i=1,\cdots,N.\label{eqn5}
\EN
The signal $p_i\c(t)$ updated at the triggering time $t\c_{ik}$ remains unchanged until the next triggering instant $t\c_{i,k+1}$; $\epsilon_i(t)$ represents the error between $p_i\c(t)$ and $p_i(t)$.
With $K=B\t P$ for a matrix $P$ to be determined later, the controller (\ref{eventcont0}) can be rewritten as 
follows, for $i=1,\cdots,N$,
\begin{eqnarray}
\mu_i(t)=g_i B\t P p_i\c(t). \label{eventcont}
\end{eqnarray}

Then, the closed-loop system comprising of (\ref{exosys}) and (\ref{eventcont}) can be written as
\begin{eqnarray}
\dot{v}_i(t)=Av_i(t)+g_iBB\t P(\epsilon_i(t)+p_i(t)),\; i=1,\cdots,N.\label{cls}
\end{eqnarray}
Define the following lumped symbols
\EQQ
v(t)&=\mbox{col}(v_1(t),\cdots,v_N(t))\\ 
\epsilon(t)&=\mbox{col}(\epsilon_1(t),\cdots,\epsilon_N(t))\\
G&={\rm diag}(g_1,\cdots,g_N) \\
p&=\mbox{col}(p_1,\cdots,p_N).
\ENN
The equations in (\ref{cls}) can be put in a compact form
\begin{eqnarray}
\dot{v}=(I_N\otimes A-\mathcal{L}G\otimes BB\t P)v +(G\otimes BB\t P)\epsilon \label{vlump}
\end{eqnarray}
where $I$ denotes an  identity matrix whose dimension is specified 
by its subscript. 
 With $p=-(\mathcal{L}\otimes I_q)v$, (\ref{vlump})  
can be further rewritten as   
\begin{eqnarray}
\dot{p}=(I_N\otimes A-\mathcal{L}G\otimes BB\t P)p -(\mathcal{L}G\otimes BB\t P)\epsilon .\label{eqn9}
\end{eqnarray}

We introduce two assumptions and two technical lemmas before 
the main statement. 

\begin{assumption}\label{as:1}
All the eigenvalues of $A$ are simple and have zero real parts and the pair $(A, B)$ is stabilizable.
\end{assumption}
\begin{assumption}\label{as:2}
The directed graph $\mathcal{G}$ is strongly connected, that is, there exists  a directed path from every node
to every other node.
\end{assumption}

\begin{lemma}\cite{mei2014consensus}\label{L2}
Under Assumption~\ref{as:2}, $\hat{\mathcal L} = R \mathcal{L}+\mathcal{L}\t R $ is a symmetric Laplacian  matrix associated with an undirected graph.  Let $\varsigma\in\mathbb{R}^N$ be any positive vector. Then, the following inequality holds
\begin{eqnarray}
\min_{{\upsilon}\t \varsigma=0,\upsilon\neq 0 }\frac{\upsilon\t \hat{\mathcal L}\upsilon}{\upsilon\t \upsilon}>\frac{\lambda_2(\hat{\mathcal L})}{N} \label{eqn10}
\end{eqnarray}
where $\lambda_2(\hat{\mathcal L})>0$
is the second-smallest eigenvalue of   $\hat{\mathcal L}$.
\end{lemma}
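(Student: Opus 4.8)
The plan is to prove the lemma in two stages: first that $\hat{\mathcal L}=R\mathcal L+\mathcal L\t R$ is indeed the Laplacian of a connected weighted undirected graph, and then the spectral bound (\ref{eqn10}).

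For the structural claim I would use that $\mathbf r$ is by definition the left zero-eigenvector of $\mathcal L$, i.e.\ $\mathcal L\t\mathbf r=0$, so that $\hat{\mathcal L}\mathbf 1=R\mathcal L\mathbf 1+\mathcal L\t R\mathbf 1=0+\mathcal L\t\mathbf r=0$; thus the symmetric matrix $\hat{\mathcal L}$ annihilates $\mathbf 1$. Its off-diagonal entries are $\mathbf r_i l_{ij}+l_{ji}\mathbf r_j=-(\mathbf r_i a_{ij}+\mathbf r_j a_{ji})\le 0$ for $i\neq j$, and every row sums to zero, so $\hat{\mathcal L}$ is a genuine symmetric Laplacian; its underlying undirected graph has an edge $\{i,j\}$ exactly when $a_{ij}>0$ or $a_{ji}>0$, since $\mathbf r_i,\mathbf r_j>0$. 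Strong connectivity of $\mathcal G$ (Assumption~\ref{as:2}) makes this undirected graph connected, hence $\hat{\mathcal L}$ is positive semidefinite with a simple zero eigenvalue carried by $\mathbf 1$: $0=\lambda_1(\hat{\mathcal L})<\lambda_2(\hat{\mathcal L})\le\cdots\le\lambda_N(\hat{\mathcal L})$.

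For (\ref{eqn10}) I would take an arbitrary $\upsilon\neq 0$ with $\upsilon\t\varsigma=0$ and split $\upsilon=\alpha\mathbf 1+\upsilon_\perp$ with $\alpha=(\mathbf 1\t\upsilon)/N$ and $\mathbf 1\t\upsilon_\perp=0$. Since $\hat{\mathcal L}\mathbf 1=0$ and $\hat{\mathcal L}$ is symmetric, $\upsilon\t\hat{\mathcal L}\upsilon=\upsilon_\perp\t\hat{\mathcal L}\upsilon_\perp\ge\lambda_2(\hat{\mathcal L})\,\upsilon_\perp\t\upsilon_\perp$ by the Courant--Fischer characterization restricted to the orthogonal complement of $\mathbf 1$. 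Next, $\upsilon_\perp\t\upsilon_\perp=\upsilon\t\upsilon-(\mathbf 1\t\upsilon)^2/N$, and maximizing $(\mathbf 1\t\upsilon)^2/(\upsilon\t\upsilon)$ over the hyperplane $\varsigma\t\upsilon=0$ returns the squared norm of the orthogonal projection of $\mathbf 1$ onto it, namely $N-(\mathbf 1\t\varsigma)^2/(\varsigma\t\varsigma)$. Because $\varsigma$ has strictly positive entries, $(\mathbf 1\t\varsigma)^2=\sum_i\varsigma_i^2+2\sum_{i<j}\varsigma_i\varsigma_j>\varsigma\t\varsigma$, whence $\upsilon_\perp\t\upsilon_\perp\ge\big((\mathbf 1\t\varsigma)^2/(N\,\varsigma\t\varsigma)\big)\,\upsilon\t\upsilon$ and therefore $\upsilon\t\hat{\mathcal L}\upsilon/(\upsilon\t\upsilon)\ge\lambda_2(\hat{\mathcal L})(\mathbf 1\t\varsigma)^2/(N\,\varsigma\t\varsigma)$. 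This lower bound is uniform in $\upsilon$ and, since $(\mathbf 1\t\varsigma)^2>\varsigma\t\varsigma$ strictly, it exceeds $\lambda_2(\hat{\mathcal L})/N$; hence the minimum in (\ref{eqn10}) strictly exceeds $\lambda_2(\hat{\mathcal L})/N$.

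The Laplacian bookkeeping and the projection formula are routine; the one point that needs care is tracking where the \emph{strict} inequality comes from. It enters in exactly two places: $\lambda_2(\hat{\mathcal L})>0$, which uses strong connectivity, and the strict bound $(\mathbf 1\t\varsigma)^2>\varsigma\t\varsigma$, which holds precisely because $\varsigma$ is a positive vector (and $N\ge 2$). I would also note the degenerate case $N=1$ is vacuous, since then $\{\upsilon\neq 0:\upsilon\t\varsigma=0\}$ is empty.
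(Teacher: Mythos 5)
Your proof is correct. Note that the paper does not prove this lemma at all --- it is imported verbatim from \cite{mei2014consensus} --- and your argument (checking that $\hat{\mathcal L}$ is a connected symmetric Laplacian via $\mathcal L\mathbf 1=0$ and $\mathcal L\t\mathbf r=0$, decomposing $\upsilon$ along $\mathbf 1$, applying Courant--Fischer on $\mathbf 1^{\perp}$, and bounding $(\mathbf 1\t\upsilon)^2/\upsilon\t\upsilon$ by the projection of $\mathbf 1$ onto $\varsigma^{\perp}$) is essentially the standard one from that reference, and in fact delivers the sharper explicit bound $\lambda_2(\hat{\mathcal L})\,(\mathbf 1\t\varsigma)^2/(N\,\varsigma\t\varsigma)$, from which the strict inequality follows precisely because $\varsigma>0$ forces $(\mathbf 1\t\varsigma)^2>\varsigma\t\varsigma$ for $N\geq 2$.
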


The next lemma gives an event-triggering mechanism under which 
the system holds a critical property that will be used in the main theorem. 

 \begin{lemma} \label{lem:st}
Consider the reference model (\ref{exosys})   with 
the controller (\ref{eventcont}) and
a time sequence $t\c_{ik},\; k \in \mathbb{S}\c_i$.
For $\eta_i >0 ,\; i=1,\cdots,N$,  let \EQQ
s_{ik}=\frac{\eta_i}{1+\eta_i}\| p_i(t\c_{ik}) \|
\ENN 
and  
\EQQ w_{ik}=&\| (A   -g_iBB\t P\sum_{j\in\mathcal{N}_i}a_{ij} ) p_i(t\c_{ik}) \| \\
w_i(t) =& \|BB\t P\sum_{j\in\mathcal{N}_i}a_{ij}g_jp_j\c(t) \|.
\ENN
For \begin{eqnarray}
\tau_{ik}=\inf\{t>0: \int_{t\c_{ik}}^{t\c_{ik} +t} [\| A\| s_{ik} + w_{ik} + w_{i}(\tau) ]d\tau =s_{ik} \}, \label{trig3}
\end{eqnarray}
one has 
\EQ
 \|\epsilon_i(t)\| \leq \eta_i\| p_i(t)\|,\; \forall t \in [t\c_{ik}, t\c_{ik} +\tau_{ik}). \label{epsilonipi}
 \EN
  \end{lemma}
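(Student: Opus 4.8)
The plan is to bound the growth of $\|\epsilon_i(t)\|$ on the interval $[t\c_{ik},t\c_{ik}+\tau_{ik})$ and simultaneously give a lower bound on $\|p_i(t)\|$, so that their ratio stays below $\eta_i$. First I would recall that on $[t\c_{ik},t\c_{i,k+1})$ the signal $p_i\c(t)=p_i(t\c_{ik})$ is constant, so $\dot\epsilon_i(t)=-\dot p_i(t)$. Using (\ref{eqn9}) componentwise, or more directly differentiating (\ref{eqn4}) and substituting (\ref{cls}), one has
\EQQ
\dot p_i(t) = A p_i(t) + g_i BB\t P \sum_{j\in\mathcal N_i} a_{ij}\, \big(p_j\c(t)\cdot\tfrac{g_j}{g_i}\big) - \big(\textstyle\sum_{j\in\mathcal N_i}a_{ij}\big) g_i BB\t P\, p_i\c(t),
\ENN
so that $\|\dot p_i(t)\| \le \|A\|\,\|p_i(t)\| + w_{ik} + w_i(t)$, after writing $p_i(t) = p_i\c(t) - \epsilon_i(t)$ to regroup the terms that are constant on the interval into $w_{ik}$ and leaving the neighbor contribution as $w_i(t)$. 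Since $\epsilon_i(t\c_{ik})=0$, integrating gives
\EQQ
\|\epsilon_i(t)\| = \Big\|\int_{t\c_{ik}}^{t}\dot p_i(\tau)\,d\tau\Big\| \le \int_{t\c_{ik}}^{t}\big[\|A\|\,\|p_i(\tau)\| + w_{ik} + w_i(\tau)\big]\,d\tau.
\ENN

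The next step is to replace $\|p_i(\tau)\|$ inside the integral by something controllable. On this interval $\|p_i(\tau)\| \le \|p_i\c(\tau)\| + \|\epsilon_i(\tau)\| = \|p_i(t\c_{ik})\| + \|\epsilon_i(\tau)\|$; as long as the claimed bound (\ref{epsilonipi}) has not yet been violated we also have $\|\epsilon_i(\tau)\|\le \eta_i\|p_i(\tau)\|$, hence $\|p_i(\tau)\| \le \frac{1}{1-\eta_i}$ ... — actually the cleaner route, and the one matching the definition of $s_{ik}$, is to bound $\|p_i(\tau)\| \le \|p_i(t\c_{ik})\| + \|\epsilon_i(\tau)\|$ and observe that while $\|\epsilon_i(\tau)\| \le s_{ik} = \frac{\eta_i}{1+\eta_i}\|p_i(t\c_{ik})\|$ one gets $\|p_i(\tau)\| \le (1+\tfrac{\eta_i}{1+\eta_i})\|p_i(t\c_{ik})\|$, which is still a bit loose; the intended estimate is simply $\|A\|\|p_i(\tau)\| \le \|A\| s_{ik}$ valid precisely on the sub-interval where $\|\epsilon_i(\tau)\|$ is small, combined with a bootstrap. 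So I would run a continuity/bootstrap argument: define $\bar\tau$ as the first time $\|\epsilon_i(t)\|$ reaches $s_{ik}$; for $t<\bar\tau$ the integrand is dominated by $\|A\|s_{ik}+w_{ik}+w_i(\tau)$, so $\|\epsilon_i(t)\| \le \int_{t\c_{ik}}^{t}[\|A\|s_{ik}+w_{ik}+w_i(\tau)]\,d\tau$, which by the definition (\ref{trig3}) of $\tau_{ik}$ stays strictly below $s_{ik}$ for all $t < t\c_{ik}+\tau_{ik}$; hence $\bar\tau \ge t\c_{ik}+\tau_{ik}$ and the bound $\|\epsilon_i(t)\|\le s_{ik}$ holds on the whole interval $[t\c_{ik},t\c_{ik}+\tau_{ik})$.

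Finally I would convert $\|\epsilon_i(t)\|\le s_{ik}$ into (\ref{epsilonipi}). Since $\|p_i(t)\| \ge \|p_i(t\c_{ik})\| - \|\epsilon_i(t)\| \ge \|p_i(t\c_{ik})\| - s_{ik} = \frac{1}{1+\eta_i}\|p_i(t\c_{ik})\|$, we obtain $\|p_i(t\c_{ik})\| \le (1+\eta_i)\|p_i(t)\|$, and therefore $\|\epsilon_i(t)\| \le s_{ik} = \frac{\eta_i}{1+\eta_i}\|p_i(t\c_{ik})\| \le \frac{\eta_i}{1+\eta_i}(1+\eta_i)\|p_i(t)\| = \eta_i\|p_i(t)\|$, which is exactly (\ref{epsilonipi}). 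The main obstacle I anticipate is making the bootstrap step fully rigorous — in particular, checking that the integrand bound $\|A\|\|p_i(\tau)\| \le \|A\| s_{ik}$ is not what is actually used (that would be false; what is used is $\|A\|\|p_i(\tau)\|\le\|A\|(\|p_i(t\c_{ik})\|+\|\epsilon_i(\tau)\|)$ together with the smallness of $\|\epsilon_i\|$, or else directly a Grönwall-type comparison), and reconciling the exact constants with the definition of $s_{ik}$. I would double-check whether the $\|A\|s_{ik}$ term in (\ref{trig3}) should instead read $\|A\|\|p_i(t\c_{ik})\|$ or carry the factor $(1+\eta_i)$; if the statement is as written, the argument must use $\|\epsilon_i(\tau)\| \le s_{ik}$ to absorb the discrepancy, and I would present it via the bootstrap on $\bar\tau$ as above rather than a naive integral inequality.
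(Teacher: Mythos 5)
Your proposal is correct and follows essentially the same route as the paper: bound the growth of $\|\epsilon_i(t)\|$ starting from $\epsilon_i(t\c_{ik})=0$, use the definition of $\tau_{ik}$ to conclude $\|\epsilon_i(t)\|\le s_{ik}$ on $[t\c_{ik},t\c_{ik}+\tau_{ik})$, and then convert this to $\|\epsilon_i(t)\|\le \eta_i\|p_i(t)\|$ via $\|p_i(t\c_{ik})\|=\|p_i(t)+\epsilon_i(t)\|\le\|p_i(t)\|+\|\epsilon_i(t)\|$. On the point you flag at the end: the statement is consistent as written because the paper splits $Ap_i(t)=Ap_i\c(t)-A\epsilon_i(t)$ and absorbs $Ap_i\c(t)=Ap_i(t\c_{ik})$ into $w_{ik}$, so the state-dependent part of the integrand is $\|A\|\|\epsilon_i(\tau)\|\le\|A\|s_{ik}$ rather than $\|A\|\|p_i(\tau)\|$ — exactly the ``absorb via $\|\epsilon_i\|\le s_{ik}$'' resolution you anticipate — and your explicit continuity/bootstrap argument on $\bar\tau$ is in fact a cleaner justification of a step the paper's proof leaves implicit.
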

  
\begin{proof} We first calculate the change rate of $\|\epsilon_i(t)\|$ as follows
\EQQ
\frac{d}{dt}\|  \epsilon_i(t)\|  \leq&\frac{\|\epsilon_i\t \|}{\|\epsilon_i\|}\|\dot{\epsilon_i}\|=\|\sum_{j\in\mathcal{N}_i}a_{ij}(\dot{v}_j(t)-\dot{v}_i(t))\|\nonumber\\
\leq&\|A \sum_{j\in\mathcal{N}_i}a_{ij}({v}_j(t)-{v}_i(t)) \\
&+\sum_{j\in\mathcal{N}_i}a_{ij}B(g_j B\t P p_j\c(t)-g_i B\t P p_i\c(t))\|
\nonumber\\
\leq&\|A p\c_i(t) -A\epsilon_i(t) \\ & +\sum_{j\in\mathcal{N}_i}a_{ij}B(g_j B\t P p_j\c(t)-g_i B\t P p_i\c(t))\|
\nonumber\\
\leq& \|  A\| \|  \epsilon_i(t)\|  +w_{ik} + w_i(t).
\ENN
Since $\|\epsilon_i(t\c_{ik})\|=0$,  for $t \in [t\c_{ik}, t\c_{ik} +\tau_{ik})$,
\EQQ
 \|\epsilon_i(t)\|  = \|\epsilon_i(t\c_{ik})\| +
 \int_{t\c_{ik}}^{t\c_{ik} +t}  \frac{d}{d\tau }\|  \epsilon_i(\tau)\|  d\tau \\
 \leq  \int_{t\c_{ik}}^{t\c_{ik} +t}     [ \|  A\| \|  \epsilon_i(\tau)\|  +w_{ik} + w_{i}(\tau) ]d\tau 
  \leq s_{ik},
\ENN
and 
\EQQ
 \|\epsilon_i(t)\|  \leq \frac{\eta_i}{1+\eta_i}\| p_i(t\c_{ik})\|
=\frac{\eta_i}{1+\eta_i}\| p_i(t) +\epsilon_i(t) \|\\
\leq \frac{\eta_i}{1+\eta_i}\| p_i(t)\|  +\frac{\eta_i}{1+\eta_i} \|\epsilon_i(t) \|.
\ENN
As a result, one has
\EQQ
 \|\epsilon_i(t)\|   \leq \eta_i \| p_i(t)\|.  \ENN
The proof is thus completed. 
\end{proof}

\begin{remark}
To calculate $\tau_{ik}$, the agent $i$ needs $s_{ik}$, $w_{ik}$, and $w_i(t)$.
It is easy to see that $s_{ik}$ and $w_{ik}$ depend on $p_i(t\c_{ik})$
that is measured at the triggered time $t\c_{ik}$. The signal 
$w_i(t)$ depends $p_j\c(t)=p_j(t\c_{jk^\prime})$ from its neighbors $j\in\mathcal{N}_i$, 
which is measured at the triggered time $t\c_{jk^\prime}$.  In other words, when one agent 
has an event to update its $p_j\c(t)$, it broadcasts the information to its neighbors. 
Using this mechanism, no agent needs to monitor the network continuously. 
\end{remark}
 
Now, the main result is stated in the following theorem.  

\begin{theorem}\label{T1:2} 
Pick $g_i \geq {\mathbf r}_i >0$, $i=1,\cdots, N$.
Under Assumptions~\ref{as:1}-\ref{as:2}, there always exists a unique solution 
$P > 0$ to the following Algebraic Riccati Equation:
\begin{eqnarray}
PA+A\t P-\lambda PBB\t P+\beta I=0\label{eqnr}
\end{eqnarray}
where 
\EQQ 0<\lambda<\frac{\lambda_2(\hat{\mathcal L})}{N},  
\beta=\frac{1}{\lambda_{\min}(GR )}.
\ENN
Let 
\EQQ
\rho=& \frac{\|  R \mathcal{L}G\otimes B\t P\|  }{\sqrt{{\lambda_2(\hat{\mathcal L})}/{N} - \lambda}} \\
b_1=&\|A\|+\lambda_{\mathcal{L}G}\|BB\t P\| \\
b_2=&\lambda_{\mathcal{L}G}\|BB\t P\|
\ENN
where  $\lambda_{\mathcal{L}G}$ denotes the largest eigenvalue of ${\mathcal{L}G}$.
Pick $\eta\geq \eta_i >0 ,i=1,\cdots,N$, and  $\phi>0$ satisfying  
\EQ
\varphi =  \rho^2  \eta^2    + N \rho^2 \phi^2  <1. \label{varsigma}
   \EN
Consider the reference model (\ref{exosys})  with the control law 
(\ref{eventcont})
and the triggering mechanism  
\begin{eqnarray}
t\c_{i,k+1}=t\c_{ik}+\max\{\tau_{ik},b\}\label{maxtrigger}
\end{eqnarray}where $\tau_{ik}$ is determined in Lemma~\ref{lem:st} and
\begin{eqnarray}
b=\frac{\ln(\phi+1)}{b_1+b_2\max\{\eta, \phi\} \sqrt{N}}.\label{lowerbound}
\end{eqnarray} 
Then, Problem 1 is solved in the sense of (\ref{eqn3}).
\end{theorem}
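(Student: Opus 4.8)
The plan is to run a Lyapunov argument on the reduced closed loop (\ref{eqn9}) for the stacked relative state $p$, using the triggering rule (\ref{maxtrigger}) to keep the held‑sample error $\epsilon$ subordinate to $p$; consensus and the pattern (\ref{pattern}) are then read off from the weighted average of the reference states. First I would take $V=p\t(GR\otimes P)p$, the weight $GR$ being chosen precisely so that the $\beta I$ term of the ARE (\ref{eqnr}) contributes exactly $-\|p\|^2$ (this is where $\beta=1/\lambda_{\min}(GR)$ and $g_i\ge\mathbf r_i$ enter). Along the closed loop $(\mathbf r\t\otimes I_q)p(t)\equiv 0$, since $p=-(\mathcal L\otimes I_q)v$ and $\mathbf r\t\mathcal L=0$; hence the scalarized vector $\xi(t):=(I_N\otimes B\t P)p(t)\in\mathbb R^N$ stays in $\{\mathbf r\}^\perp$. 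Differentiating $V$ along (\ref{eqn9}) and substituting $PA+A\t P=\lambda PBB\t P-\beta I$ turns the $p$‑quadratic part into $\lambda\,\xi\t GR\xi-\|p\|^2$ minus a Laplacian‑type form in $\xi$, and turns the $\epsilon$‑part into a bilinear term whose gain is controlled by $\rho$. Using $g_i\ge\mathbf r_i$ to dominate the Laplacian form by $\hat{\mathcal L}$ on $\{\mathbf r\}^\perp$, and then Lemma~\ref{L2} with $\varsigma=\mathbf r$ to get $\xi\t\hat{\mathcal L}\xi>(\lambda_2(\hat{\mathcal L})/N)\,\xi\t\xi$, leaves a spare coercive term $-(\lambda_2(\hat{\mathcal L})/N-\lambda)\|\xi\|^2$; completing the square against the $\epsilon$‑term yields
\EQQ
\dot V\le -\|p(t)\|^2+\rho^2\|\epsilon(t)\|^2 .
\ENN

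The crux is then to prove that $\|\epsilon(t)\|^2\le(\eta^2+N\phi^2)\|p(t)\|^2$ for all $t\ge 0$, so that the right‑hand side above is $\le-(1-\varphi)\|p(t)\|^2<0$ by (\ref{varsigma}). At a given $t$, agent $i$ sits in some interval $[t\c_{ik},t\c_{ik}+\max\{\tau_{ik},b\})$. On the subinterval $[t\c_{ik},t\c_{ik}+\tau_{ik})$, Lemma~\ref{lem:st} already gives $\|\epsilon_i(t)\|\le\eta_i\|p_i(t)\|$. On the padded subinterval $[t\c_{ik}+\tau_{ik},t\c_{ik}+b)$ that appears whenever $\tau_{ik}<b$, I would re‑run the differential inequality $\tfrac{d}{dt}\|\epsilon_i\|\le\|A\|\|\epsilon_i\|+w_{ik}+w_i(t)$ of Lemma~\ref{lem:st}, bound $w_{ik}\le b_1\|p_i(t\c_{ik})\|$, and --- under the running assumption that every agent already satisfies $\|\epsilon_j\|\le\max\{\eta,\phi\}\|p\|$ --- bound $w_i(t)$ by a constant times $b_2\,\max\{\eta,\phi\}\sqrt N\,\|p\|$; a Gr\"onwall estimate over a window of length at most $b$ then gives $\|\epsilon_i(t)\|\le(e^{(b_1+b_2\max\{\eta,\phi\}\sqrt N)b}-1)\|p(t)\|=\phi\|p(t)\|$, which is exactly why $b$ is taken as in (\ref{lowerbound}). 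The running assumption is closed by a continuity/maximality argument in $t$. Splitting the agents at time $t$ into those obeying the $\eta_i\|p_i\|$ bound and the at most $N$ obeying the $\phi\|p\|$ bound gives $\|\epsilon(t)\|^2\le\eta^2\|p(t)\|^2+N\phi^2\|p(t)\|^2$.

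With $\dot V\le-(1-\varphi)\|p(t)\|^2$ in hand, $V$ decays exponentially, so $p(t)\to 0$ and hence $\epsilon(t)\to 0$. Setting $v_\infty(t)=(\mathbf r\t\otimes I_q)v(t)$, a direct computation using $\mathbf r\t\mathcal L=0$ gives $\dot v_\infty-Av_\infty=(\mathbf r\t G\otimes BB\t P)(p(t)+\epsilon(t))\to 0$, so $v_\infty$ satisfies (\ref{pattern}); and the vector $\delta(t):=v(t)-\mathbf 1\otimes v_\infty(t)=((I-\mathbf 1\mathbf r\t)\otimes I_q)v(t)$ lies in the subspace $\{\zeta:(\mathbf r\t\otimes I_q)\zeta=0\}$, satisfies $(\mathcal L\otimes I_q)\delta(t)=-p(t)\to 0$, and --- since $\mathcal L\otimes I_q$ is boundedly invertible on that subspace --- obeys $\delta(t)\to 0$; hence $v_i(t)-v_\infty(t)\to 0$ for every $i$, which is (\ref{eqn3}). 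Zeno behavior is excluded for free, since $t\c_{i,k+1}-t\c_{ik}=\max\{\tau_{ik},b\}\ge b>0$; this uniform lower bound is also what makes the interlocking triggering instants (each $\tau_{ik}$ depending on neighbours' held samples through $w_i$) recursively well defined.

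The main obstacle is the padded‑interval estimate in the second step: enlarging an inter‑event interval out to the fixed timer $b$ must not push the ratio $\|\epsilon_i\|/\|p\|$ past $\phi$, and because $w_i$ depends on the neighbours' held samples $p_j\c$, the naive per‑agent Gr\"onwall bound is circular and has to be organized as a simultaneous bootstrap over the whole network, with the constants $b_1$, $b_2$ and the $\sqrt N$ factor tuned to match (\ref{lowerbound}). A secondary difficulty is the first step's passage from the directed Laplacian to the symmetric $\hat{\mathcal L}$: unlike the undirected case one cannot diagonalize $\mathcal L$, so the coercivity must be extracted from $R\mathcal LG$‑type quadratic forms restricted to $\{\mathbf r\}^\perp$ using only $g_i\ge\mathbf r_i$ together with Lemma~\ref{L2}.
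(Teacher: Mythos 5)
Your proposal is correct and follows essentially the same route as the paper's proof: the weighted Lyapunov function $V=p\t(GR\otimes P)p$, Lemma~\ref{L2} applied to the scalarized vector orthogonal to the left eigenvector, completion of the square to reach $\dot V\le -\|p\|^2+\rho^2\|\epsilon\|^2$, the two-case analysis of (\ref{maxtrigger}) (Lemma~\ref{lem:st} for the $\tau_{ik}$ branch, a Gr\"onwall bound on $\|\epsilon_i\|/\|p\|$ for the fixed-timer branch, closed by a bootstrap), and recovery of (\ref{eqn3}) and (\ref{pattern}) from $p\to 0$ via the spectral splitting of $\mathcal L$ along $\mathbf 1$ and $\{\mathbf r\}^\perp$. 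The only discrepancy is bookkeeping: the paper applies Lemma~\ref{L2} to $\upsilon=(G\otimes B\t P)p$ with $\varsigma=G^{-1}\mathbf r$ and uses $g_i\ge\mathbf r_i$ to pass from $G^2\otimes PBB\t P$ to $GR\otimes PBB\t P$, rather than to dominate the Laplacian form itself.
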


\begin{proof}
Choose a candidate Lyapunov function  as follows
\begin{eqnarray}
V(p)=\frac{1}{2}p\t (GR \otimes P)p.\label{eqn10b}
\end{eqnarray}
 Along the trajectory of (\ref{eqn9}), the time derivative of $V(p)$ is obtained as
\begin{align}
\dot{V}(p)=&\frac{1}{2}p\t [GR \otimes (PA+A\t P)-G\hat{\mathcal L}G\otimes PBB\t P]p\nonumber\\
&-p\t (GR \mathcal{L}G\otimes PBB\t P)\epsilon.\label{eqdV}
\end{align}

Define a positive vector 
\EQQ \varsigma =G^{-1}{\mathbf r}    \in{\mathbb R}^N
\ENN 
and hence  
 \EQQ \upsilon=(G\otimes B\t P)p \in{\mathbb R}^N.
 \ENN
Since
\EQQ
\upsilon\t \varsigma&={p}\t (G\otimes PB) G^{-1}{\mathbf r}    \nonumber\\
&=-v\t (\mathcal{L}\t \otimes I_q)(\mathbf r\otimes  PB)\nonumber\\
&=-v\t (\mathcal{L}\t {\mathbf r} \otimes PB)=0, 
\ENN
one has, by using Lemma \ref{L2}, 
\EQ
 p\t (G\hat{\mathcal L}G\otimes PBB\t P) p  =
  p\t (G\otimes PB) \hat{\mathcal L} (G\otimes B\t P) p \\
   =\upsilon\t  \hat{\mathcal L} \upsilon \geq \frac{\lambda_2(\hat{\mathcal L})}{N}\upsilon\t   \upsilon =\frac{\lambda_2(\hat{\mathcal L})}{N}p\t (G^2\otimes PBB\t P)p. \label{eqdV1}
\EN


Next, denote  
\EQQ 
a = 2 {\lambda_2(\hat{\mathcal L})}/{N} -2 \lambda >0.
\ENN
So, \EQQ
\rho^2=\frac{2\|  R \mathcal{L}G\otimes B\t P\| ^2 }{ a}.
\ENN
 A direct calculation shows
\EQ
\; & -p\t (GR \mathcal{L}G\otimes PBB\t P)\epsilon  \\
= & -p\t (\sqrt{\frac{a}{2}}G\otimes PB)(\sqrt{\frac{2}{a}} R \mathcal{L}G\otimes B\t P)\epsilon \\
\leq & \frac{a}{4}p\t (G^2\otimes PBB\t P)p(t)+\frac{\rho^2}{2}\|\epsilon\|^2 .\label{eqdV2}
\EN
The condition  $g_i \geq \mathbf r_i$ yields that 
\EQQ G^2 \otimes PBB\t P \geq GR\otimes PBB\t P.
\ENN
Therefore, by substituting (\ref{eqdV1}) and (\ref{eqdV2}) into (\ref{eqdV}) 
and noting ${\lambda_2(\hat{\mathcal L})}/{N}-{a}/{2} =\lambda$, one has
\EQQ
\dot{V}(p)&\leq\frac{1}{2}p\t \Big[GR \otimes \Big(PA+A\t P-\lambda PBB\t P\Big)\Big]p+\frac{\rho^2}{2}
\|\epsilon\|^2\\
&\leq-\frac{1}{2}\beta\lambda_{\min}(GR ) \|p\|^2+\frac{\rho^2}{2}
\|\epsilon\|^2\\
&=-\frac{1}{2} \|p\|^2+\frac{\rho^2}{2}
\|\epsilon\|^2.
\ENN
If the following condition is always satisfied 
\begin{eqnarray}
\|\epsilon\|^2\leq\frac{\varphi}{\rho^2}\|p\|^2\label{combined condition}
\end{eqnarray}
with $\varphi <1$, one has
\begin{eqnarray}
\dot{V}(p) \leq - \frac{(1-\varphi)}{2}\| p\|^2. \label{error}
\end{eqnarray}
As a result, one can conclude that  
\EQ
\lim_{t\rightarrow\infty}p(t) &=-\lim_{t\rightarrow\infty} (\mathcal{L}\otimes I_q)v(t)=0 \\
\lim_{t\rightarrow\infty}\epsilon(t) &=0.
\label{asymptooticconsensus}
\EN
Under Assumption~\ref{as:2}, the Laplacian $\mathcal{L}$
has one zero eigenvalue and the other eigenvalues have positive
real parts. In particular,  there exist matrices $W\in\mathbb{R}^{\left(N-1\right)\times N}$,
$U\in\mathbb{R}^{N\times \left(N-1\right)}$ such that
 \EQQ
T =  \left[ \begin{array}{c} {\mathbf r}\t \\ W \end{array} \right] ,\;
T^{-1} =  \left[\begin{array}{cc}
{\bf 1}& U \end{array}\right].
\ENN
One has the following similarity transformation
\EQQ
T\mathcal{L}T^{-1}=\left[\begin{array}{cc}
0 & 0\\
0 & \mathcal{J}
\end{array}\right]
\ENN
where $\mathcal{J}=W\mathcal{L}U$
is a matrix with all eigenvalues having positive real parts.
From the definition of $T$ and $T^{-1}$, one has
\EQQ {\bf 1} {\mathbf r}\t + U W =I
\ENN
and
\EQQ
 W = \mathcal{J}^{-1}W\mathcal{L} UW =\mathcal{J}^{-1}W\mathcal{L} (I-{\bf 1} {\mathbf r}\t) =\mathcal{J}^{-1}W\mathcal{L}.
\ENN
which implies
\EQQ
\lim_{t\rightarrow\infty} (W \otimes I_q)v(t)=
\lim_{t\rightarrow\infty} (\mathcal{J}^{-1}W\otimes I_q)  (\mathcal{L} \otimes I_q)v(t)=0.
\ENN
Therefore, one has 
\EQQ v& = ({\bf 1} {\mathbf r}\t )\otimes I_q v+ (U W)\otimes I_q v  \\
&= {\bf 1} \otimes  ({\mathbf r}\t \otimes I_q) v+ (U\otimes I_q ) (W\otimes I_q) v\\
&= {\bf 1} \otimes  v_\infty + (U\otimes I_q ) (W\otimes I_q) v
\ENN
for  $v_\infty = ({\mathbf r}\t \otimes I_q){v}$. 
On one hand, one has
\EQQ
({\mathbf r}\t \otimes I_q) \dot{v}&= {\mathbf r}\t \otimes  A v +({\mathbf r}\t \otimes I_q) (G\otimes BB\t P)\epsilon  \\
&=A  ({\mathbf r}\t \otimes  I_q) v +({\mathbf r}\t \otimes I_q) (G\otimes BB\t P)\epsilon, \ENN
that is
\EQQ
 \dot{v}_\infty =A  v_\infty +({\mathbf r}\t \otimes I_q) (G\otimes BB\t P)\epsilon 
\ENN
which satisfies (\ref{pattern}). On the other hand, 
\EQQ \lim_{t\to\infty} \| v(t) -  {\bf 1} \otimes  v_\infty\| =
\lim_{t\to\infty}  \|(U\otimes I_q ) (W\otimes I_q) v(t)\| =0
\ENN
which is equivalent to   (\ref{eqn3}). Problem 1 is thus solved. 

\medskip

What is left is to verify (\ref{combined condition}).  In particular, we will first prove 
that 
\EQ
\|\epsilon_i(t)\|\leq \max\{ \eta_i\|p_i(t)\|,\phi\|p(t)\| \},\; \forall t\in [t\c_{ik}, t\c_{i,k+1}) 
\label{pf-epsilon}
\EN
in two different cases based on (\ref{maxtrigger}).

Case 1 ($t\c_{i,k+1} =t\c_{ik}+\tau_{ik}$): By Lemma~\ref{lem:st}, one has (\ref{epsilonipi})
which implies (\ref{pf-epsilon}).

Case 2 ($t\c_{i,k+1} =t\c_{ik}+b$): It is noted that $\|\epsilon_i(t)\|/\|p(t)\| = 0$ for $t=t\c_{ik}$. 
It suffices to show that it takes at least a time $b$ for $\|\epsilon_i(t)\|/\|p(t)\|$ to evolve from $0$
to $\phi$. For this purpose, we calculate the change rate as follows
\EQQ
\frac{d}{dt}\frac{\|\epsilon_i\|}{\|p\|}=&\frac{\epsilon_i\t \dot{\epsilon}_i}{\|\epsilon_i\|\|p\|}-\frac{\|\epsilon_i\|p\t \dot{p}}{\|p\|^3}\nonumber\\
\leq&\frac{\|\dot{\epsilon}_i\|}{\|p\|}+\frac{\|\epsilon_i\|\|\dot{p}\|}{\|p\|^2}\leq\frac{\|\dot{p}\|}{\|p\|}+\frac{\|\epsilon_i\|\|\dot{p}\|}{\|p\|^2}\nonumber\\
\leq&\frac{\|\dot{p}\|}{\|p\|}\Big(1+\frac{\|\epsilon_i\|}{\|p\|}\Big)\nonumber\\
\leq&\Big(b_1+b_2\frac{\|\epsilon\|}{\|p\|}\Big)\Big(1+\frac{\|\epsilon_i\|}{\|p\|}\Big).
\ENN

For $\|\epsilon_i(t)\|/\|p(t)\|$ not reaching $\phi$, one has 
\EQQ
\frac{\|\epsilon\|}{\|p\|}=\sqrt{\sum_{i=1}^N\Big(\frac{\|\epsilon_i\|}{\|p\|}\Big)^2}\leq\max\{\eta, \phi\} \sqrt{N}.
\ENN  
As a result,  
\EQQ
\frac{d}{dt}\frac{\|\epsilon_i\|}{\|p\|}\leq\Big(b_1+b_2 \max\{\eta, \phi\} \sqrt{N}\Big)\Big(1+\frac{\|\epsilon_i\|}{\|p\|}\Big).
\ENN
So, the time for $\|\epsilon_i(t)\|/\|p(t)\|$ change from $0$
to $\phi$ is at least  $b$ defined in (\ref{lowerbound}).

From above, (\ref{pf-epsilon}) is proved, which implies (\ref{combined condition}) since \EQ
\|\epsilon(t)\|^2 \leq  \eta^2  \|p(t)\|^2  + N \phi^2 \|p(t)\|^2 =
\frac{\varphi}{\rho^2}  \|p(t)\|^2 \EN
for $\varphi$ defined in (\ref{varsigma}). 
\end{proof}

\begin{remark}
With $\bar B=\sqrt{\lambda}B$ and $\bar C=\sqrt{\beta}I_n$, the equation (\ref{eqnr}) can be rewritten as
\begin{eqnarray*}
PA+A^TP-P\bar B \bar B\t P+\bar C\t \bar C=0,
\end{eqnarray*}
which is the  Algebraic Riccati Equation (ARE) given in \cite{kucera1972contribution}. Therefore, Assumption \ref{as:1}  guarantees the existence of $P$.
In Theorem~\ref{T1:2},  Zeno behavior is obviously excluded because 
\begin{eqnarray}
\inf_{k\in\mathbb{S}\c_i}\{t\c_{i,k+1}-t\c_{ik}\}\geq b,\; i=1,\cdots,N. \end{eqnarray}
\end{remark}
%

\section{Problem 2- Event-Triggered Perturbed Output Regulation}\label{regulation} 

In this section, we will move to an explicit solution to Problem~2. 
It is worth mentioning that the output regulation problem for each agent is disengaged from the network and hence the controller can be designed in a fully distributed fashion. 
For the ease of presentation, we simply omit the agent label $i$ of (\ref{agent}) and (\ref{trackerror}) throughout this section, 
which results in the following set of equations 
\EQ
\dot{z}&=f_0(z,x_1,w)\\
\dot{x}_1&=f_1(z,{x_1},w)+b_1(w)x_{2}\\
&\vdots  \\
\dot{x}_r&=f_r(z,{x_1},\cdots,{x_r},w)+b_r(w)u\\
e&=x_1-c(v)\\
\dot{v}&=Av+B{\mu}.
\label{eqnor1}
\EN
To solve Problem~2, we need the following standard assumptions as in \cite{chen2015stabilization}. 
\begin{assumption}\label{RA1}
For $j=1,\cdots,r$, $b_j(w)>0, \forall  w\in\mathbb{W}$.
\end{assumption}    

\begin{assumption}\label{RA2}
There exists sufficiently smooth function $\mathbf{z}(v,w),\mathbf{x}_1(v,w),\cdots,\mathbf{x}_{r+1}(v,w)$ satisfying the following regulator equations, for all $ v\in\mathbb{R}^q$ and $w\in \mathbb{R}^l$:
\EQ
\frac{\partial \mathbf{z}(v,w)}{\partial v}Av=&f_0(\mathbf{z}(v,w),c(v),w)\\
\mathbf{x}_1(v,w)=&c(v)\\
\frac{\partial \mathbf{x}_j(v,w)}{\partial v}Av=&f_j(\mathbf{z}(v,w),\mathbf{x}_1(v,w),\cdots,\mathbf{x}_j(v,w),w)\\
&+b_j(w)\mathbf{x}_{j+1}(v,w),\;\;j=1,\cdots,r.\label{eqnor5}
\EN
\end{assumption}

\begin{assumption}\label{RA3}
There exists a linear observable steady-state generator of the following form, with $\vartheta_j\in\mathbb{R}^{\ell_j}$
\begin{eqnarray}
\frac{\partial \vartheta_j(v,w)}{\partial v}Av=\Phi_j\vartheta_j(v,w),\;\mathbf{x}_{j+1}(v,w)=\Psi_j\vartheta_j(v,w)\label{eqnor6}
\end{eqnarray}
for an observable pair $({\Psi}_j, {\Phi}_j)$.
\end{assumption}
\noindent Under the Assumptions \ref{RA1}-\ref{RA3}, for some controllable pair of matrices $(M_j,N_j)$ with $M_j\in\mathbb{R}^{\ell_j\times \ell_j}$ Hurwitz, there exists a nonsingular matrix $T_j$ satisfying the Sylvester equation $M_jT_j+N_j\Psi_j=T_j\Phi_j$. Then, on mapping $\theta_j(v,w)=T_j\vartheta_j(v,w)$, we can construct an alternative steady-state generator as follows
\EQ
\frac{\partial \theta_j(v,w)}{\partial v}Av=&T_j\Phi_jT^{-1}_j\theta_j(v,w)\\
\mathbf{x}_{j+1}(v,w)=&\Psi_jT^{-1}_j\theta_j(v,w).\label{ssg}
\EN
In the event-triggered control scenario, we introduce two additional components, namely, a dynamic actuator compensator  
\EQ
u&=\bar{u}+\Psi_r T^{-1}_r\eta_r\\
\dot{\eta}_r&=M_r\eta_r+N_ru\label{dync}
\EN
and a dynamic sensor compensator
\begin{eqnarray}
\bar{x}_1&=&e\nonumber\\
\bar{x}_j&=&x_j-\Psi_{j-1}T^{-1}_{j-1}\eta_{j-1}, \;j=2,\cdots,r\nonumber\\
\dot{\eta}_j&=&M_j\eta_j+N_j{x}_{j+1},\; j=1,\cdots,r-1.\label{dyns}
\end{eqnarray}
The structure of these two components is primarily based on the internal model design  and corresponds with the steady-state generator defined in (\ref{ssg}). The necessity of these two additional components arises in the event-triggered output regulation problem when $v(t)$ is a time-varying signal. The details are explained in \cite{Khan2018} with the schematic diagram illustrated in Fig. \ref{fig:1}.

Perform the following coordinate transformation
\EQ
{z}_0&=z-\mathbf{z}(v,w)\\
z_j&=\eta_j-\theta_j(v,w)-b^{-1}_j(w)N_j\bar{x}_j,\;j=1,\cdots,r.\label{coordinate}
\EN
As a result, the system composed of (\ref{eqnor1}), (\ref{dync}) and (\ref{dyns}) is transformed into following form
\EQ
\dot{z}_0=&\bar{f}_0(z_0,\bar{x}_1,\nu)+\tilde{p}_0(\nu)\mu \\
\dot{z}_1=&p_1(z_0,z_1,\bar{x}_1,\nu)+\tilde{p}_1(\nu)\mu \\
\dot{z}_j=&p_j(z_0,z_1,\cdots,z_j,\bar{x}_1,\cdots,\bar{x}_j,\nu)+\tilde{p}_j(\nu)\mu \\
\dot{\bar{x}}_1=&\bar{f}_1(z_0,{z}_1,\bar{x}_1,\nu)+b_1(w)\bar{x}_2+\tilde{f}_1(\nu)\mu \\
\dot{\bar{x}}_j=&\bar{f}_j(z_0,z_1,\cdots,z_j,\bar{x}_1,\cdots,\bar{x}_j,\nu)+b_j(w)\bar{x}_{j+1}, \\
& j=2,\cdots,r\label{compactsys}
\EN
where $\bar{x}_{r+1}=\bar{u}$, $\nu={\rm col}(v,w)$ and all the functions are properly defined. Let $\bar{x}=[\bar{x}_1,\cdots,\bar{x}_r]\t$ and $\xi=[z_0\t,\cdots,z_r\t,\bar{x}\t]\t.$  
The system (\ref{compactsys}) can be rewritten in a compact form as
\EQ
\dot{\xi}=f(\xi,\nu)+E(w)\bar{u}+\bar{\mu}\label{newsys}
\EN
with $\bar{\mu}=[\tilde{p}_0(\nu),\cdots, \tilde{p}_r(\nu),\tilde{f}_1(\nu),0,\cdots,0]\t \mu$, $E(w)=[0,\cdots,0,b_r(w)]\t$ and $f(0,\nu)=0$. Let the event-triggered controller $\bar{u}(t_k)$ be of the following form
\begin{eqnarray}
\bar{u}(t)=\kappa(\bar{x}(t\r_k)),\;t\in[t\r_k,t\r_{k+1}).\label{controller}
\end{eqnarray}
Denote
\begin{eqnarray}
\varpi(t)=\kappa(\bar{x}(t\r_k))-\kappa(\bar{x}(t)),\;t\in[t_k,t_{k+1}).\label{newsignals}
\end{eqnarray}
Then, the system (\ref{newsys}), with additional output $q(t)$, can be rewritten as follows
\EQ
\dot{\xi}(t)=&f_c(\xi(t),\nu(t))+E(w)\varpi(t)+\bar{\mu} \\
q(t)=&\frac{d\kappa(\bar{x}(t))}{dt}\label{newsys_pertub}
\EN
where
\begin{eqnarray*}
f_c(\xi(t),\nu(t))=f(\xi(t),\nu(t))+E(w)\kappa(\bar{x}(t)).
\end{eqnarray*}

The following minimum-phase assumption is required for the output regulation problem. 

\begin{assumption} \cite{chen2015stabilization}\label{RA4}
There exists a quadratic ISS Lyapunov function $V_0(z_0)=z_0\t S_0z_0$ whose derivative, along the trajectory $\dot{z_0}=\bar{f}_0(z_0,\bar{x}_1,\nu)$, satisfies
\begin{eqnarray*}
\dot{V}_0(z_0)\leq-\alpha_0(\|  z_0\| )+\sigma_0(\|  \bar{x}_1\| )
\end{eqnarray*}
for $\alpha_0\in\mathcal{K}_{\infty}$  and $\sigma_0\in\mathcal{K}$. Also
\begin{eqnarray}
\lim_{s\to 0^{+}} \sup\frac{s^2}{\alpha_0(s)}<+\infty,\;\lim_{s\to 0^{+}} \sup\frac{\sigma_0(s)}{s^2}<+\infty.\label{eqnor22}
\end{eqnarray}
\end{assumption} 

Throughout the paper, the symbols  $\cal K$, $\mathcal{K}_{\infty}$, and $\cal KL$ represent 
the sets of class $\cal K$, class $\mathcal{K}_{\infty}$, and class $\cal KL$ functions, respectively.

Now, the main result is summarized in the following theorem.
For the system (\ref{newsys_pertub}) with the ideal case $\bar\mu=0$, the same theorem was given in \cite{Khan2018}.
Due to this additional perturbation, the problem studied in this paper, called a 
perturbed output regulation problem, becomes more complicated. 
 
\begin{theorem}\label{T1}
Consider the system (\ref{eqnor1})  with the 
dynamic actuator compensator  (\ref{dync}), 
the dynamic sensor compensator (\ref{dyns}), and the event-triggered controller (\ref{controller}), 
under Assumptions~\ref{RA1}-\ref{RA4}. 
Suppose $\mu$ and $v$ are bounded. There exists a sufficiently smooth function $\kappa$ in (\ref{controller}) such that
the system (\ref{newsys_pertub}) satisfies the following input-to-state stability (ISS) and input-to-output stability (IOS) properties:
\EQ
\|  \xi(t)\| \leq&{\rm max}\{\tilde{\beta}(\|  \xi(\tau_0)\| ,t-\tau_0),\tilde{\gamma}(\| \varpi_{[\tau_0,t]}\| ), 
\\ & \tilde{\gamma}_{\bar{\mu}}(\| \bar{\mu}_{[\tau_0,t]}\| )\} ,\; \forall t\geq\tau_0\geq 0 \\ \label{eqn16a}
\|  q(t)\| \leq&{\rm max}\{\beta(\|  \xi(\tau_0)\| ,t-\tau_0),{\gamma}(\|  \varpi_{[\tau_0,t]}\| ), \\
& {\gamma}_{\bar{\mu}}(\| \bar{\mu}_{[\tau_0,t]}\| )\},\; \forall t\geq\tau_0\geq 0 
\EN
for $\tilde{\beta}, \beta\in\cal{KL}$ and $\tilde\gamma , \gamma,\tilde{\gamma}_{\bar{\mu}}$ and ${\gamma}_{\bar{\mu}}\in\cal{K}_\infty$. 
The function $\gamma$ is locally Lipschitz.
Let the event-triggering mechanism be given as follows \footnote{The data sampling event is not 
triggered if $\|  \varpi(t)\|   =  \|  q(t)\| = 0$.},
\begin{eqnarray}
t\r_{k+1}=\inf\{t>t\r_k\vert \;\|  \varpi(t)\| -\sigma(\|  q(t)\| )= 0\}\label{regutrigge}
\end{eqnarray}
with $\sigma\in\cal{K}$ satisfying
\begin{eqnarray}
\gamma(\sigma(s)) =  c s,\; 0< c <1, \;\;\forall s\geq 0. \label{smallgaincond}
\end{eqnarray}
Then, Problem 2 is solved in the sense of  \begin{eqnarray}
\lim_{t\to\infty}\|  e_{[t,\infty)}\| \leq\hat{\gamma}\Big(\lim_{t\to\infty}\|  \mu_{[t,\infty)}\| \Big) \label{reguerror}
\end{eqnarray}
for some $\hat{\gamma}\in\mathcal{K}$.  Moreover, 
\begin{eqnarray}
\inf_{k\in\mathbb{S}\r}\{t\r_{k+1}-t\r_k\}>0.\label{zeno0}
\end{eqnarray}
\end{theorem}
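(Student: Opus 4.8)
The plan is to run the ISS small-gain argument of \cite{Khan2018}, now carrying the extra bounded input $\bar\mu$ through every step. Concretely I would: (i) construct a smooth $\kappa$ rendering the $\xi$-subsystem ISS in state $\xi$ and IOS in output $q$ with respect to the input pair $(\varpi,\bar\mu)$, i.e.\ establish (\ref{eqn16a}); (ii) close the loop through the triggering rule, which enforces $\|\varpi(t)\|\le\sigma(\|q(t)\|)$, and use (\ref{smallgaincond}) to get an asymptotic gain from $\bar\mu$ to $\xi$; (iii) convert this into (\ref{reguerror}) via $e=\bar x_1$ and $\|\bar\mu\|\le L\|\mu\|$; and (iv) exclude Zeno.

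For (i), I would start from the inverse subsystem $\dot z_0=\bar f_0(z_0,\bar x_1,\nu)+\tilde p_0(\nu)\mu$: Assumption~\ref{RA4} supplies a quadratic ISS-Lyapunov function $V_0$, and since $\nu$ is bounded the term $\tilde p_0(\nu)\mu$ acts as a bounded additive disturbance, so $z_0$ is ISS from $(\bar x_1,\bar\mu)$. Then, along the lower-triangular chain $z_1,\dots,z_r,\bar x_1,\dots,\bar x_r$ of (\ref{compactsys}), I would run the recursive gain-assignment (backstepping-type) design of \cite{chen2015stabilization,Khan2018}, at each stage choosing the virtual control so as to dominate the cross terms and the newly appearing bounded perturbation $\tilde p_j(\nu)\mu$ (resp.\ $\tilde f_1(\nu)\mu$). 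This produces a smooth $\kappa$ and a composite ISS-Lyapunov function $V(\xi)$ whose derivative along (\ref{newsys_pertub}) is at most $-\alpha(\|\xi\|)$ plus class-$\mathcal K$ terms in $\|\varpi\|$ and $\|\bar\mu\|$, which is the first estimate of (\ref{eqn16a}). Since $q$ is the time derivative of $\kappa(\bar x)$ along the closed loop, $q=(\partial\kappa/\partial\bar x)\dot{\bar x}$ with $\dot{\bar x}$ a smooth function of $(\xi,\nu,\varpi,\bar\mu)$, so $q$ inherits an IOS estimate in $(\varpi,\bar\mu)$, with $\gamma$ locally Lipschitz. I expect this construction — verifying that $\bar\mu$, which appears in every equation of (\ref{compactsys}), is absorbed at each recursion step without spoiling the gain assignment — to be the main obstacle; for $\bar\mu=0$ it is exactly the theorem proved in \cite{Khan2018}.

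For (ii) and (iii), the rule (\ref{regutrigge}) enforces $\|\varpi(t)\|\le\sigma(\|q(t)\|)$ on each inter-event interval, hence $\|\varpi_{[\tau_0,t]}\|\le\sigma(\|q_{[\tau_0,t]}\|)$ for all $\tau_0\le t$. Using (\ref{eqn16a}) with $\tau_0=0$ and $c<1$ first gives boundedness of $\xi,q,\varpi$, so the limits below are finite. Passing to $\limsup_{t\to\infty}$ in (\ref{eqn16a}) removes the $\mathcal{KL}$ terms; writing $(\cdot)^\infty:=\limsup_{t\to\infty}\|\cdot(t)\|$ one obtains $q^\infty\le\max\{\gamma(\varpi^\infty),\gamma_{\bar\mu}(\bar\mu^\infty)\}$ and $\varpi^\infty\le\sigma(q^\infty)$; since $\gamma(\sigma(s))=cs$ with $c<1$, this forces $q^\infty\le\gamma_{\bar\mu}(\bar\mu^\infty)$, and the $\xi$-estimate then yields $\xi^\infty\le\tilde\gamma'(\bar\mu^\infty)$ for some $\tilde\gamma'\in\mathcal K$. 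Because $e=\bar x_1$ is a block of $\xi$, $e^\infty\le\xi^\infty$; and because $\bar\mu(t)=[\tilde p_0(\nu),\dots,\tilde f_1(\nu),0,\dots,0]\t\mu(t)$ with $\nu$ bounded, $\|\bar\mu(t)\|\le L\|\mu(t)\|$ for a constant $L$, so $e^\infty\le\tilde\gamma'(L\mu^\infty)=:\hat\gamma(\mu^\infty)$ with $\hat\gamma\in\mathcal K$, i.e.\ (\ref{reguerror}).

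For (iv), the boundedness already obtained gives $\|q(t)\|\le\bar q$ for all $t$. On $[t\r_k,t\r_{k+1})$ one has $\varpi(t\r_k)=0$ and $\dot\varpi(t)=-q(t)$, so $\|\varpi(t)\|\le\bar q\,(t-t\r_k)$; meanwhile the local Lipschitzness of $\gamma$ together with $\gamma(\sigma(s))=cs$ makes $\sigma(s)/s$ bounded below by some $\ell>0$ on the compact range $[0,\bar q]$, so the threshold satisfies $\sigma(\|q(t)\|)\ge\ell\|q(t)\|$. Combining these, and using that $\dot q$ is bounded along the bounded closed-loop trajectory (so $\|q\|$ varies slowly over a short window), yields $\inf_{k}(t\r_{k+1}-t\r_k)>0$, hence (\ref{zeno0}); this part follows the same lines as \cite{Khan2018}.
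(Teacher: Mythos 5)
Your proposal is correct and follows essentially the same route as the paper: construct $\kappa$ by the recursive design of \cite{chen2016robust}/\cite{Khan2018} to obtain the ISS/IOS estimates (\ref{eqn16a}), feed the triggering inequality $\|\varpi\|\leq\sigma(\|q\|)$ into them, use the small-gain condition (\ref{smallgaincond}) first for boundedness and then for the asymptotic gain from $\bar{\mu}$, deduce (\ref{reguerror}) via $e=\bar{x}_1$ and $\|\bar{\mu}\|\leq L\|\mu\|$, and defer Zeno exclusion to \cite{Khan2018} exactly as the paper does. The one point where your execution differs is the asymptotic-gain step: you take $\limsup$ on both sides of (\ref{eqn16a}) (which implicitly needs the double limit, $t\to\infty$ followed by $\tau_0\to\infty$, and is worth spelling out), whereas the paper establishes the same conclusion by a self-contained contradiction over the dyadic intervals $[t^{\ast}/4,t^{\ast}]$, iterating the contraction factor $c^{n}$ against the uniform bound $q_\infty$; both arguments are valid.
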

 
\begin{proof} First, a sufficiently smooth function $\kappa$ can be explicitly constructed following 
the recursive technique given in  \cite{chen2016robust} such that the closed-loop system
\EQQ
\dot{\xi}(t)=&f_c(\xi(t),\nu(t))=f(\xi(t),\nu(t))+E(w)\kappa(\bar{x}(t))
\ENN
is globally asymptotically stable. Hence, the system (\ref{newsys_pertub}) satisfies the 
 ISS property in (\ref{eqn16a}).
Because $q(t)$ can be rewritten as a function of $\bar x$ and $\dot {\bar x}$, and 
$\bar x$ is a component of $\xi$,  it is straightforward to prove that 
the ISS property implies the  IOS property in (\ref{eqn16a}).
The proof for that the function $\gamma$ is locally Lipschitz also follows the explicit construction of 
the ISS gain function.  More details can also be referred to in  \cite{Khan2018}.

Next, we aim to prove that the closed-loop system (\ref{newsys_pertub}) has the following ISS property
\begin{eqnarray}
\lim_{t\to\infty}\| \xi_{[t,\infty)}\| \leq\bar{\gamma}(\lim_{t\to\infty}\|  \bar{\mu}_{[t,\infty)}\| )\label{issregu}
\end{eqnarray}
for some $\bar{\gamma}\in\mathcal{K}$.

With the definition of $q(t)$, the signal $\varpi(t)$  specified by (\ref{newsignals}) is calculated
as follows:
\begin{eqnarray*}
\varpi(t)=- \int_{t_k\r}^{t}q(s)ds, t\in[t\r_k,t\r_{k+1}).
\end{eqnarray*}
As a result, the closed-loop system can be regarded as the interconnection of the $\xi$-subsystem and the $\varpi$-subsystem. In particular, the event-triggered condition in (\ref{regutrigge}) is designed such that
\begin{eqnarray}
\|  \varpi(t)\| \leq\sigma(\|  q(t)\| ),\;\;\forall t\in[t\r_k,t\r_{k+1}).\label{eqn22}
\end{eqnarray}

By substituting (\ref{eqn22}) into the IOS property of (\ref{eqn16a}), we obtain
\EQ
\|  q(t)\| \leq{\rm max} \{\beta(\|  \xi(0)\| ,0),{\gamma}(\sigma(\|  q_{[0,t]}\| )), \\
{\gamma}_{\bar{\mu}}(\| \bar{\mu}_{[0,t]}\| )\}.\label{eqn25}
\EN
From (\ref{smallgaincond}), it is seen that $\gamma(\sigma(\|  q_{[0,t]}\| ))<\|  q_{[0,t]}\| $. This implies
\begin{eqnarray}
\|  q(t)\|  \leq{\rm max}\{\beta(\|  \xi(0)\| ,0),{\gamma}_{\bar{\mu}}(\| \bar{\mu}_{[0,t]}\| )\},\;\;\forall t\geq 0. \label{eqn26}
\end{eqnarray}
Hence, from the ISS property of (\ref{eqn16a}) and (\ref{eqn22}), we have
\EQQ
\|  \xi(t)\| 
\leq{\rm max}\{\tilde{\beta}(\|  \xi(0)\| ,0),\tilde{\gamma} (\sigma(\|  q_{[0,t]}\| )),\tilde{\gamma}_{\bar{\mu}}(\| \bar{\mu}_{[0,t]}\| )\} \nonumber\\
\leq{\rm max}\{\tilde{\beta}(\|  \xi(0)\| ,0),\tilde{\gamma} (\sigma(\beta(\|  \xi(0)\| ,0))), \\
\tilde{\gamma} (\sigma ({\gamma}_{\bar{\mu}}(\| \bar{\mu}_{[0,t]}\| ))), 
\tilde{\gamma}_{\bar{\mu}}(\| \bar{\mu}_{[0,t]}\| )\},\;\forall t\geq 0.\label{ISS} 
\ENN
From above, the states $q(t)$ and $\xi(t)$ are bounded for $t\geq 0$. In particular, 
denote   $\|  q(t)\|  \leq q_\infty$ and $\|  \xi(t)\|  \leq \xi_{\infty}$ for two constants $q_\infty$ and $\xi_\infty$. 
 
\medskip

Next, we will examine the system behaviors during two intervals $[t^{\ast}/2, t^{\ast}]$ and $[t^{\ast}/4, t^{\ast}]$ for any time $t^{\ast} >0$. First, the IOS property of (\ref{eqn16a}) with $\tau_0= t^{\ast}/4$ implies
\EQ
\|  q_{[t^{\ast}/2,t^{\ast}]}\| \leq{\rm max}\{\beta(\|  \xi(t^{\ast}/4)\| , t^{\ast}/4),{\gamma}(\|  \varpi_{[t^{\ast}/4,t^{\ast}]}\| ), \\{\gamma}_{\bar{\mu}}(\| \bar{\mu}_{[t^{\ast}/4,t^{\ast}]}\| )\}.\label{eqn30}
\EN
For any $t \in [t^{\ast}/4, t^{\ast}]$, there exists an integer $k$ such that $t \in [t\r_k, t\r_{k+1})$.   Then, from the event-triggering mechanism (\ref{eqn22}), 
\begin{eqnarray}
\| \varpi_{[t^{\ast}/4,t^{\ast}]}\| \leq\sigma(\|  q_{[t^{\ast}/4,t^{\ast}]}\| ).\label{eqn31}
\end{eqnarray}
Substituting (\ref{eqn31}) into (\ref{eqn30}) gives
\EQ
\|  q_{[t^{\ast}/2,t^{\ast}]}\| \leq & {\rm max}\{\beta(\|  \xi(t^{\ast}/4)\| , t^{\ast}/4), \\ & {\gamma}(\sigma(\|  q_{[{t^{\ast}}/{4},t^{\ast}]}\| )),
{\gamma}_{\bar{\mu}}(\| \bar{\mu}_{[t^{\ast}/4,t^{\ast}]}\| )\}.\label{eqn32}
\EN
 Also, it is noted that
\begin{eqnarray*}
\gamma(\sigma(\|  q_{[t^{\ast}/2,t^{\ast}]}\| ))<\|  q_{[t^{\ast}/2,t^{\ast}]}\| ,\;\;\forall \|  q_{[t^{\ast}/2,t^{\ast}]}\| >0.
\end{eqnarray*}
Then, (\ref{eqn32}) reduces to
\begin{eqnarray}
\|  q_{[t^{\ast}/2,t^{\ast}]}\| \leq{\rm max}\{\beta(\xi_{\infty}, t^{\ast}/4),{\gamma}(\sigma(\|  q_{[t^{\ast}/4,t^{\ast}/2]}\| )),\nonumber\\
{\gamma}_{\bar{\mu}}(\| \bar{\mu}_{[t^{\ast}/4,t^{\ast}]}\| )\}.\label{eqn33}
\end{eqnarray}
Denote $z(t^{\ast}):=\|  q_{[t^{\ast}/2,t^{\ast}]}\|$.  Then,
one has $z(t)\leq q_\infty, \forall t\geq 0$ and
$ z(t^{\ast}/2)  =\|  q_{[t^{\ast}/4,t^{\ast}/2]}\| $.  
As a result, (\ref{eqn33}) can be rewritten as
\EQ
z(t^{\ast})\leq{\rm max} \{\beta(\xi_{\infty},t^{\ast}/4), \gamma(\sigma(z(t^{\ast}/2))),  \\ 
{\gamma}_{\bar{\mu}}(\| \bar{\mu}_{[t^{\ast}/4,t^{\ast}]}\| )\} , 
\forall t^* >0.\label{inez}
\EN
 
Next, we will show that for any $\delta>0$, there exists $T>0$, such that
\EQ
z(t^{\ast})\leq{\gamma}_{\bar{\mu}}(\lim_{t\to\infty}\| \bar{\mu}_{[t,\infty]}\| )+\delta,\;\forall t^*>T. \label{inez1}
\EN
Otherwise, there exists a positive $\delta$, such that, for any $T$, there exists $t^{\ast}> {T}$ such that, 
\EQ
z(t^{\ast})>{\gamma}_{\bar{\mu}}(\lim_{t\to\infty}\| \bar{\mu}_{[t,\infty]}\| )+\delta\geq\delta.\label{inez2}
\EN
Pick a positive integer $n$ satisfying 
\EQQ q_\infty<\frac{\delta}{c^n}\ENN
and $T$ satisfying 
\begin{eqnarray*}
\beta(\xi_{\infty},\frac{T}{4^n})<\delta
\end{eqnarray*}
and
\begin{equation*}
{\gamma}_{\bar{\mu}}(\| \bar{\mu}_{[\frac{T}{4^n},\infty]}\| )<{\gamma}_{\bar{\mu}}(\lim_{t\to\infty}\| \bar{\mu}_{[t,\infty]}\| )+\delta.
\end{equation*}
So, there exists $t^{\ast}>T$ such that $z(t^{\ast})>\delta$. From (\ref{inez}), since
\EQQ
z(t^{\ast})&>\delta>\beta(\xi_{\infty},\frac{T}{4^n})>\beta(\xi_{\infty},\frac{t^{\ast}}{4^n})\nonumber\\
z(t^{\ast})&>{\gamma}_{\bar{\mu}}(\lim_{t\to\infty}\| \bar{\mu}_{[t,\infty]}\| )+\delta\\
&\geq{\gamma}_{\bar{\mu}}(\| \bar{\mu}_{[\frac{T}{4^n},\infty]}\| )\geq{\gamma}_{\bar{\mu}}(\| \bar{\mu}_{[\frac{t^{\ast}}{4},t^{\ast}]}\| ),
\ENN
one has 
\begin{eqnarray*}
z(t^{\ast})\leq \gamma(\sigma(z(t^{\ast}/2))) = c z(t^{\ast}/2).
\end{eqnarray*}
By repeating this manipulation $n$ times, one has
\begin{eqnarray*}
\delta<z(t^{\ast}) \leq c^n z(\frac{t^{\ast}}{2^n})
\end{eqnarray*}
and hence
\begin{eqnarray*}
q_\infty<\frac{\delta}{c^n}<z(\frac{t^{\ast}}{2^n})
\end{eqnarray*}
 which is a contradiction with $z(t)\leq q_\infty, \forall t\geq 0.$ 

\noindent Finally, (\ref{inez1}) implies 
\begin{equation*}
\lim_{t\to\infty}\|  q_{[t,\infty)}\| \leq\gamma_{\bar{\mu}}(\lim_{t\to\infty}\| \bar{\mu}_{[t,\infty]}\| )
\end{equation*}
and hence
\begin{equation*}
\lim_{t\to\infty}\|  \varpi_{[t,\infty)}\| \leq\sigma(\gamma_{\bar{\mu}}(\lim_{t\to\infty}\| \bar{\mu}_{[t,\infty]}\| )).
\end{equation*}
It is ready to see that (\ref{issregu}) holds. 

Since $e=\bar x_1$ is a component of $\xi$ and 
 $\bar{\mu}=[\tilde{p}_0(\nu),\cdots, \tilde{p}_r(\nu),\tilde{f}_1(\nu),0,\cdots,0]\t \mu$
with $\nu={\rm col}(v,w)$  bounded,  (\ref{issregu}) implies (\ref{reguerror})
and Problem 2 is thus solved.

The remaining proof for the avoidance of Zeno behavior (\ref{zeno0}) is referred to in the proof of \cite{Khan2018}. 
As the proof is irrelevant to the additional $\bar{\mu}$ in the system, it can be directly used here and the details are thus ignored. 
 \end{proof}

%
%


\section{A Numerical Example}

In this section, we demonstrate the event-triggered output synchronization of four heterogeneous nonlinear MASs.  Consider four nonlinear agents described by the following lower triangular structure: 
 \EQ
\dot{z}_i&=-z_i+\begin{bmatrix}0\\2\end{bmatrix}x_i \\
\dot{x}_i&=-\begin{bmatrix}0&1\end{bmatrix}z_i+\begin{bmatrix}1&0\end{bmatrix}z_ix_i+w_ix_i+u_i \\
y_i&=x_i,\;\;i=1,\cdots,4,\label{agent12}
\EN
where $x_i\in \mathbb{R}$, $z_i\in\mathbb{R}^2$, and  $\vert w_i\vert\leq 1$. The objective is to synchronize all the agents' outputs to a sinusoidal pattern governed by (\ref{pattern}) with:
\begin{eqnarray*}
A=\begin{bmatrix}0&-1\\1&0\end{bmatrix}, \;\;c(v_{\infty}(t))=\begin{bmatrix}1&0\end{bmatrix}{v}_{\infty}(t).
\end{eqnarray*}    
Consider a network equipped with a directed graph whose asymmetric Laplacian matrix $\mathcal{L}$ is given as 
\begin{eqnarray*}
\begin{bmatrix}1& 0& 0& -1\\-1& 1& 0& 0\\ 0& -1 &1& 0\\ 0& 0& -1& 1\end{bmatrix}.
\end{eqnarray*} 
The positive left eigenvector associated with zero eigenvalue of $\mathcal{L}$ is given as $\mathbf r=\begin{bmatrix}1/4&1/4&1/4&1/4 \end{bmatrix}\t$. Choose $g_1=g_2=g_3=g_4=1$, $\lambda=0.19$, and $\beta=2.5$. Then on solving (\ref{eqnr}) with $B=[0,1]\t$, we obtain $P=\begin{bmatrix} 6.07&-1.12\\-1.12&5.00\end{bmatrix}$ and subsequently $K=\begin{bmatrix}-1.12&5.00\end{bmatrix}$, $b_1=11.25$ and $b_2=10.25$. By selecting $\eta_1=\eta_2=\eta_3=\eta_4=0.0425$, $\eta=0.045$, and $\phi=0.03$, we obtain a lower bound $b=0.0542$, as defined in  (\ref{lowerbound}).   Employing
the reference model (\ref{exosys})  with the control law 
(\ref{eventcont})
and the event-triggering mechanism (\ref{maxtrigger}), we achieve consensus of reference models in the sense of (\ref{eqn3}) as shown in Fig. \ref{config}. The Zeno behavior is excluded, which is demonstrated by the event-triggered sampling intervals in Fig.~\ref{ref2}.
This verifies solvability of Problem 1 as expected by Theorem~\ref{T1:2}.

Next, we consider an explicit solution to Problem 2, that is the 
the event-triggered perturbed output regulation problem for each agent and its associated reference model. 
As mentioned in section \ref{regulation}, the agent label $i$ is ignored for notation conciseness, as 
the design is given in a completely distributed fashion. 
Assumption~\ref{RA1} is trivially satisfied for the system under consideration. 
Let $v=[v_1,v_2]\t$.  Assumption \ref{RA2} is satisfied with the following solution to the regulator equations,
\begin{eqnarray*}
\mathbf{z}(v,w)&=&\begin{bmatrix}0\\v_1+v_2\end{bmatrix}\nonumber\\
\mathbf{x}(v,w)&=&v_1\nonumber\\
\mathbf{u}(v,w)&=&(1-w)v_1.\nonumber
\end{eqnarray*} 
Then, the steady-state generator is given by
\begin{eqnarray*}
\dot{\theta}(v,w)=T\Phi T^{-1}\theta(v,w),\;\mathbf{u}(v,w)=\Psi T^{-1}\theta
\end{eqnarray*}
with the pairs ($\Psi,\Phi$)
\begin{eqnarray*}
\Psi=\begin{bmatrix}1&0\end{bmatrix}, \Phi=\begin{bmatrix}0&-1\\1&0\end{bmatrix}.
\end{eqnarray*}
Choose a pair of controllable matrices 
\begin{eqnarray}
M=\begin{bmatrix}-1&0\\0&-2\end{bmatrix},\;N=\begin{bmatrix}1\\2\end{bmatrix}
\end{eqnarray}
and solve $T$ from the Sylvester equation $MT+N\Psi=T\Phi$.
It verifies Assumption~\ref{RA3} and gives the dynamic actuator compensator 
\EQ
u&=\bar{u}+\Psi T^{-1}\eta \\
\dot{\eta}&=M\eta+Nu.\label{internalexample}
\EN
Next, performing the coordinate transformation (\ref{coordinate}) on the system composed of (\ref{agent12}) and (\ref{internalexample}) leads to
\EQ
\dot{z}_{0}=&-{z}_{0}+\begin{bmatrix}0\\2\end{bmatrix}e-\begin{bmatrix}0\\v_1-v_2\end{bmatrix}B\mu\nonumber\\
\dot{z}_{1}=&Mz_{1}+\begin{bmatrix}0&1\end{bmatrix}Nz_{0}-\begin{bmatrix}1&0\end{bmatrix}Nz_0e-\begin{bmatrix}1&0\end{bmatrix}Nv_1z_0\\
&+(MN-wN)e-(Nv_2+T\Phi ^{-1}\theta(v,w))B\mu\\
\dot{e}=&-\begin{bmatrix}0&1\end{bmatrix}z_{0}+\begin{bmatrix}1&0\end{bmatrix}z_0e+\begin{bmatrix}1&0\end{bmatrix}v_1z_0+we\\
&+\Psi T^{-1}z_{1}+\Psi T^{-1}Ne+\bar{u}+v_2B\mu.
\EN
It is easy to verify Assumption~\ref{RA4} for the above system. Thus, 
the event-triggered controller is explicitly calculated as follows
\begin{eqnarray*}
\bar{u}(t)=\kappa(e(t_k\r))=-30e(t_k\r)-e^3(t_k\r), \;t\in[t_{k}\r,t_{k+1}\r).
\end{eqnarray*}
With the explicit function $\kappa$, the signals $\varpi(t)$ and $q(t)$ can be computed, 
as well as the the IOS gain function $\gamma(s) =40s$.
As a result,  the event-triggering mechanism (\ref{regutrigge}) is implemented with $\sigma(s)=(0.99/40)s$, satisfying (\ref{smallgaincond}). The simulation result is illustrated in Figs. \ref{simuresult} and \ref{trg2}. It is observed that the robust output synchronization in a sinusoidal pattern is achieved.  Also, the sampling intervals verify avoidance of Zeno behavior.
\begin{figure}[!t]
\centering
\includegraphics[width=3in]{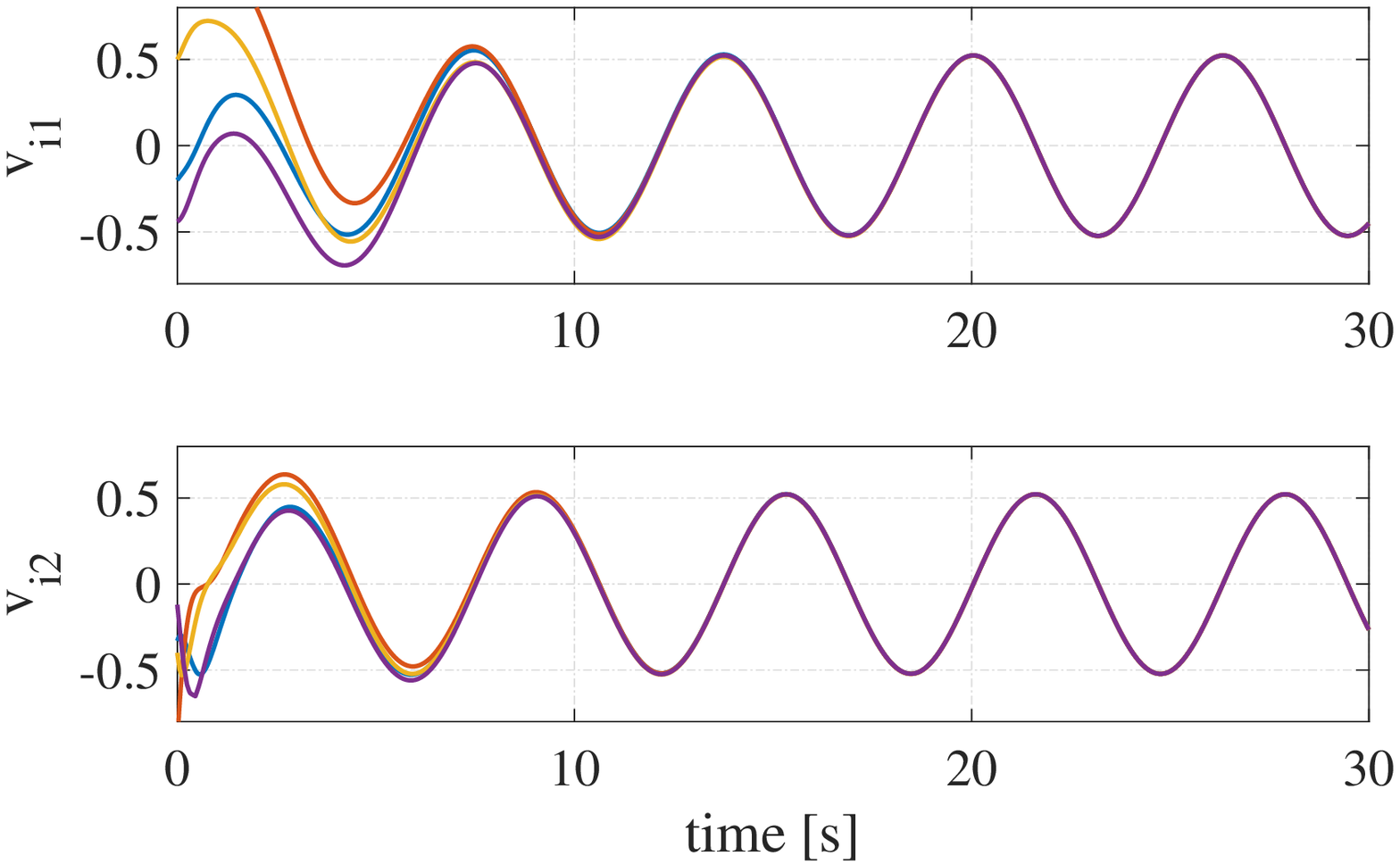}
\caption{Consensus of reference models, $i=1,\cdots,4$.\label{config}}
\centering
\includegraphics[width=3.in]{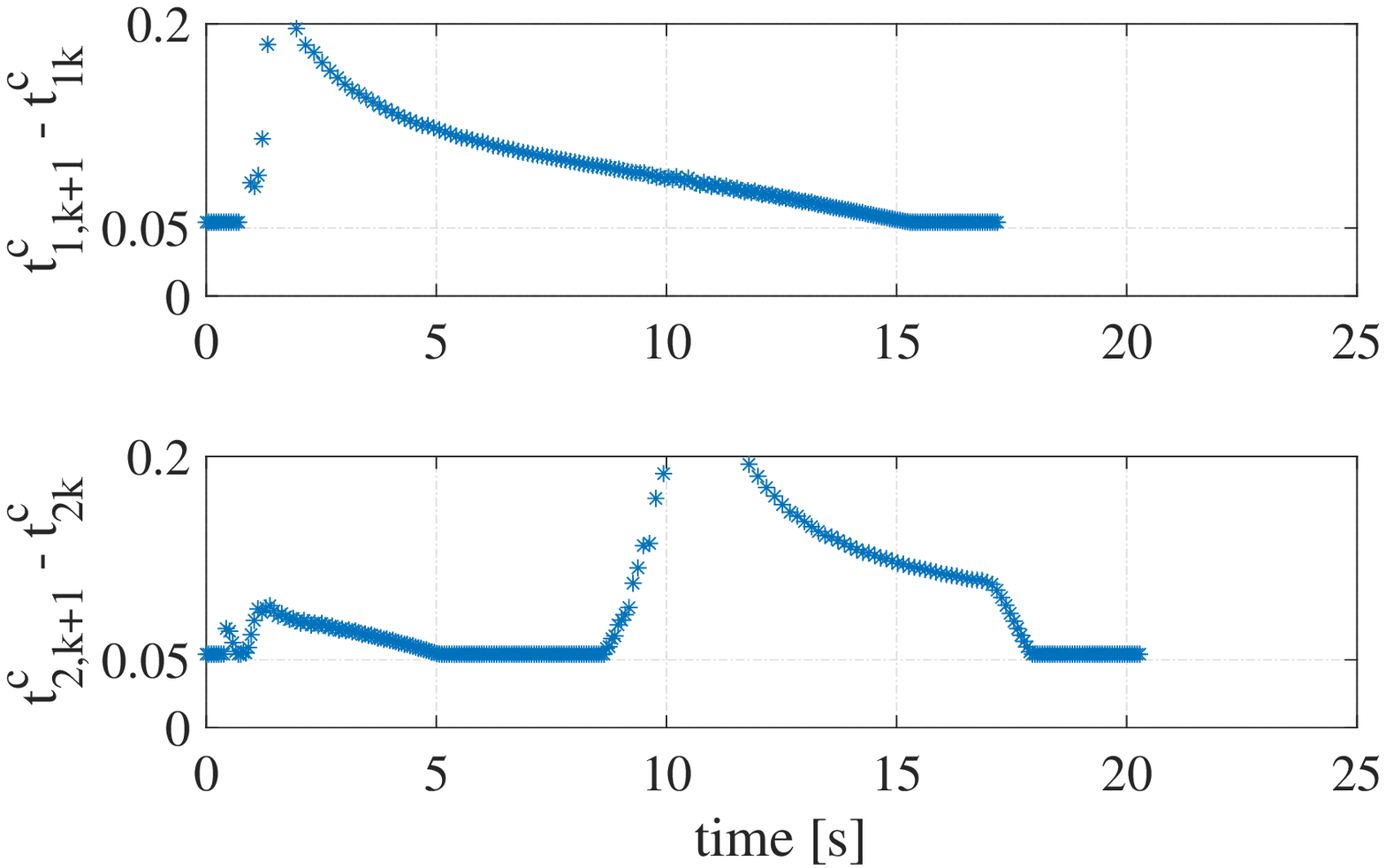}
\includegraphics[width=3in]{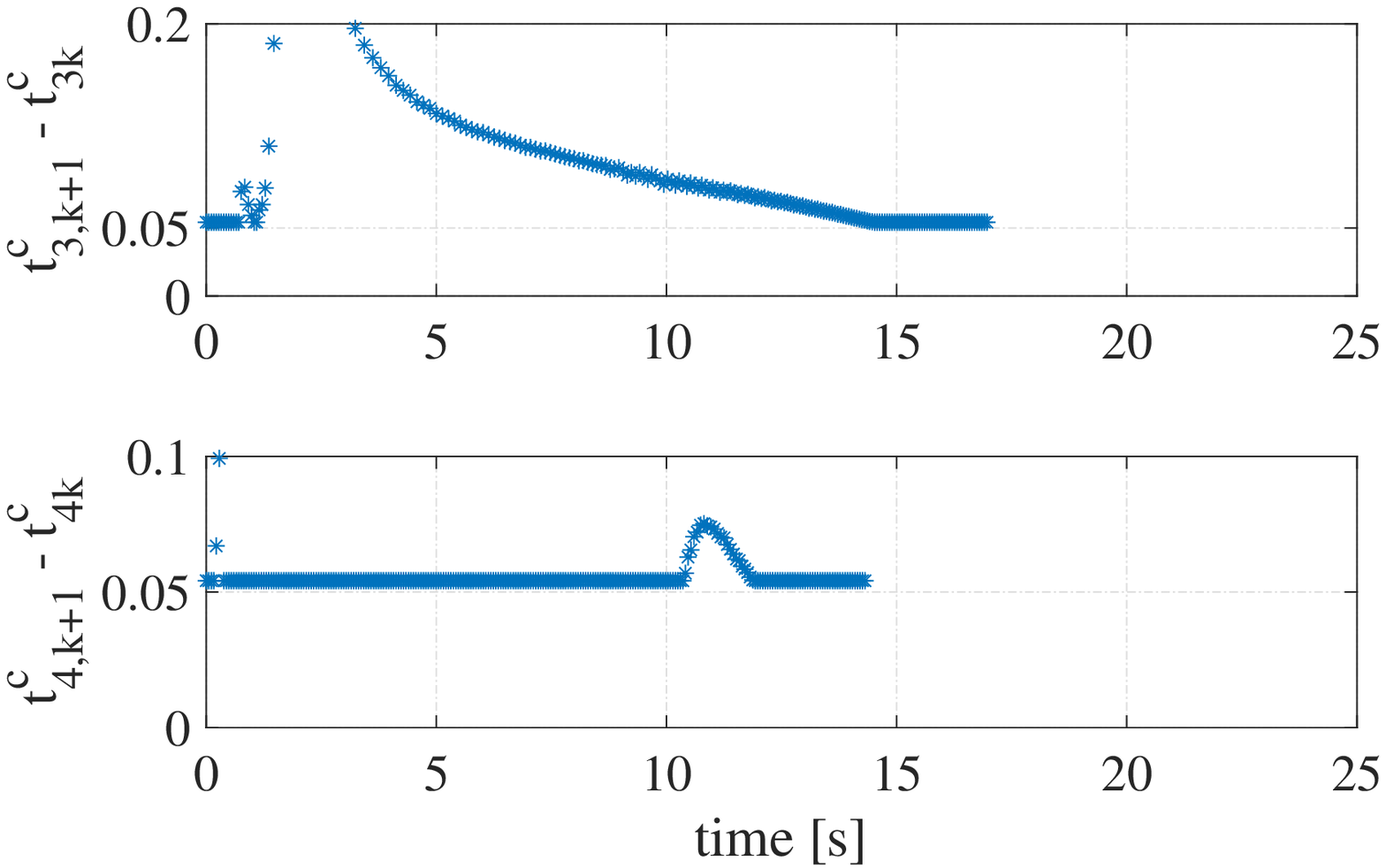}
\caption{Event-triggered sampling intervals in consensus of reference models.\label{ref2}}
\end{figure}

\begin{figure}[!t]
\centering
\includegraphics[width=2.8in]{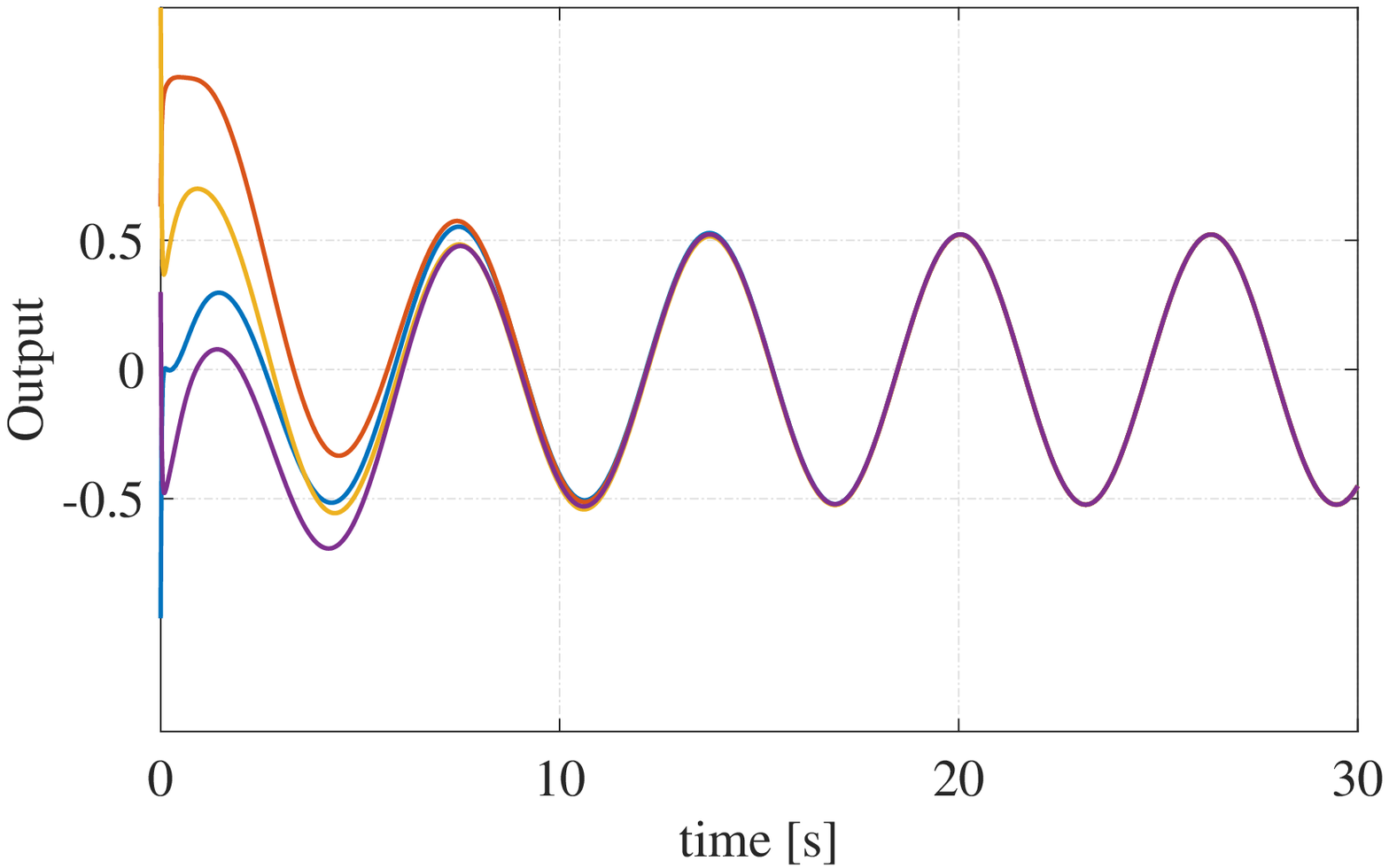}
\caption{Synchronization profiles of the four agents.\label{simuresult}}
\centering
\includegraphics[width=3in]{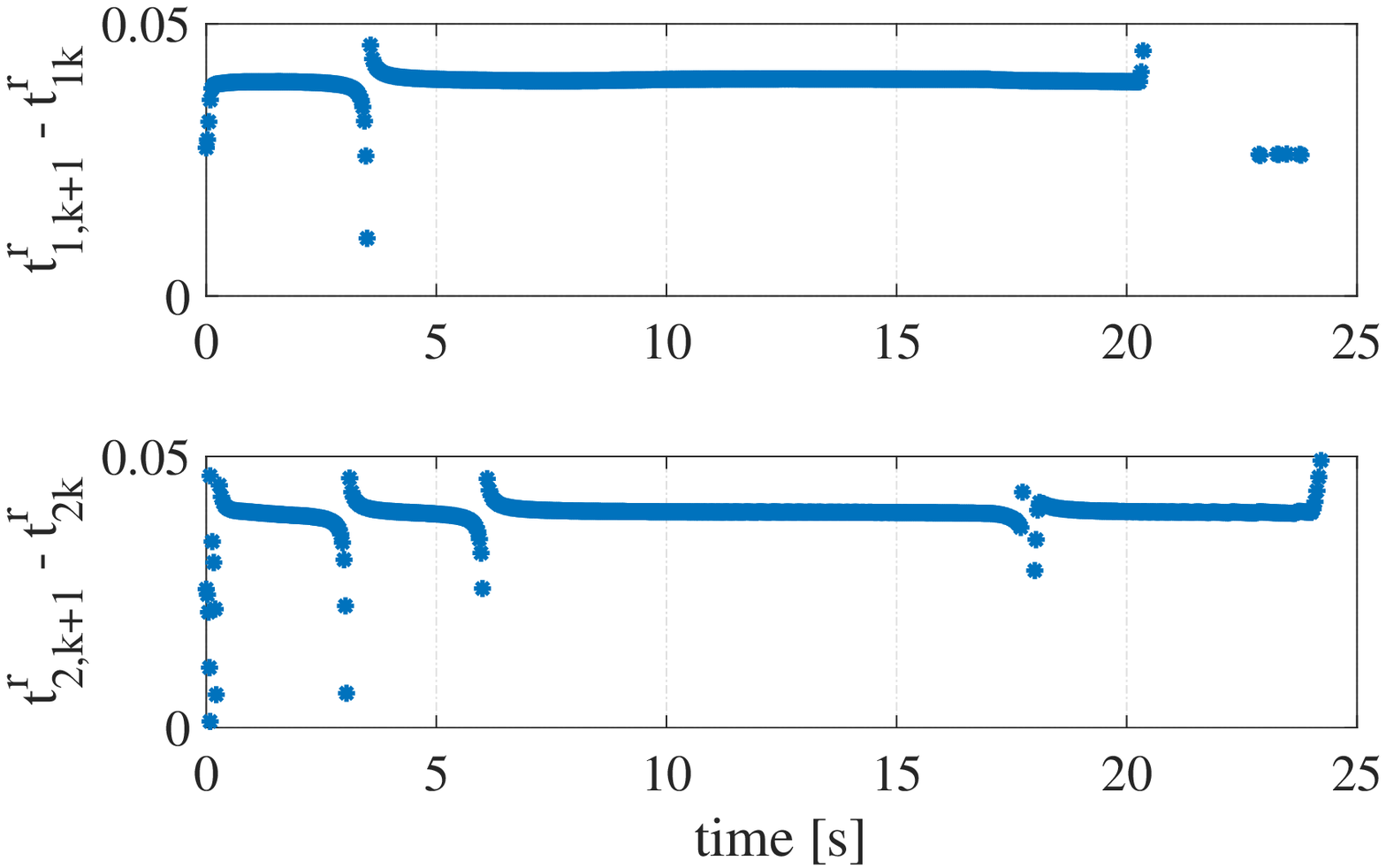}
\includegraphics[width=3in]{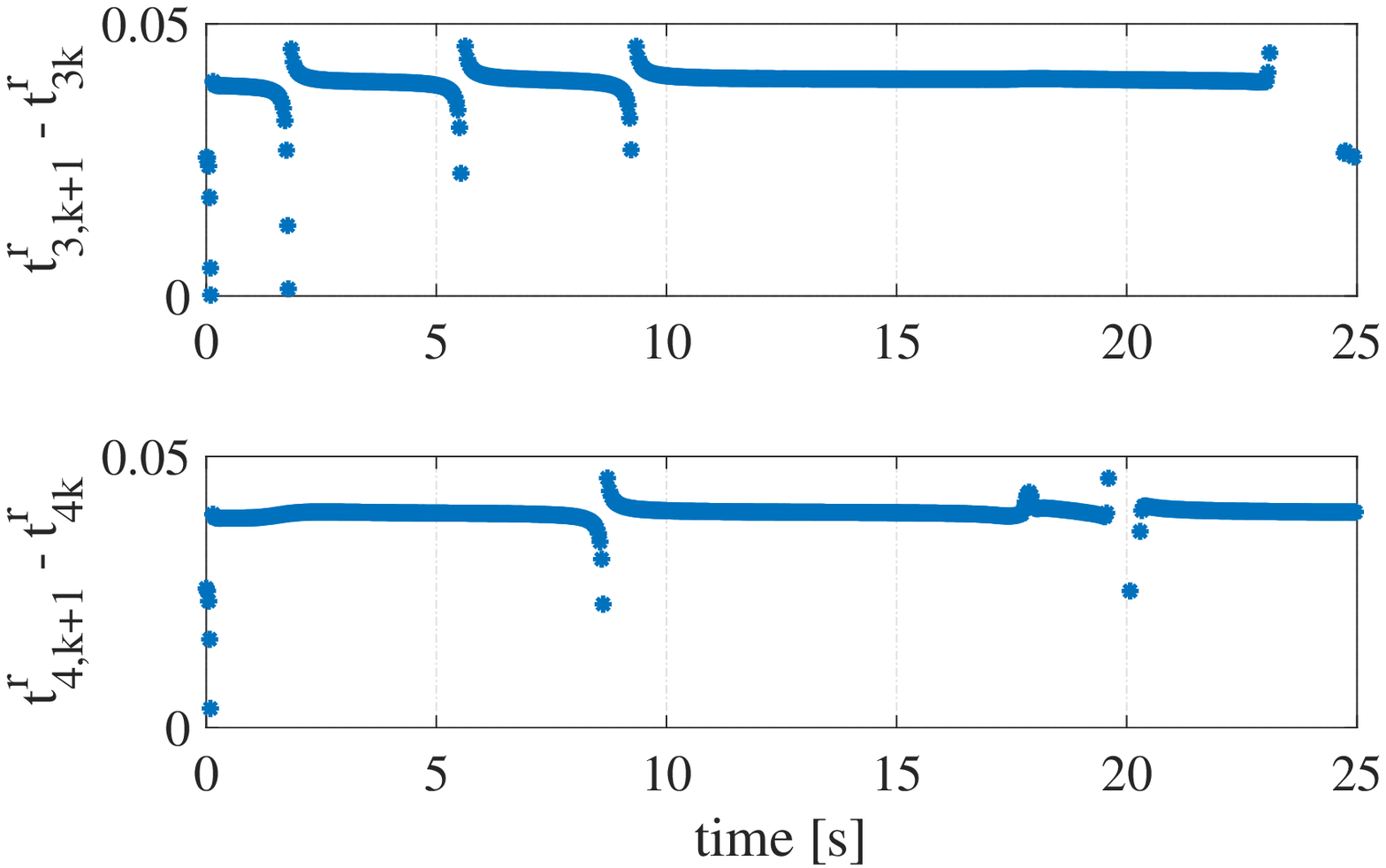}
\caption{Event-triggered sampling intervals in output regulation.\label{trg2}}
\end{figure}

\section{conclusion}\label{con}
In this paper, we have proposed an explicit solution to the event-triggered output synchronization problem for
a class of heterogeneous nonlinear multi-agent systems in a network with a directed topology. 
It has also been shown that continuous monitoring  among agents and Zeno phenomenon can be avoided. 
The two-fold result given in this paper has contributions to development of networked event-triggering mechanisms 
for both linear systems and nonlinear systems. 

 
\bibliographystyle{IEEEtran}
\bibliography{eventbib}

\end{document}